\documentclass{article}

\usepackage[a4paper]{geometry} 
\usepackage[T1]{fontenc}							
\usepackage{lmodern}
\usepackage{amsthm}
\usepackage[utf8]{inputenc}							
\usepackage{amsfonts, amssymb, amsmath, amsbsy, amsbsy}
\usepackage[fixamsmath,disallowspaces]{mathtools}
\usepackage{epsfig}
\usepackage{graphicx}
\usepackage{colordvi}
\usepackage{pgfplots}
\pgfplotsset{compat=1.12}
\usetikzlibrary{circuits.ee.IEC}
\usepackage{circuitikz}
\usepackage{booktabs}         
\usepackage{array}            
\usepackage{paralist}         
\usepackage{verbatim}         

\bibliographystyle{abbrv}

\usepackage{microtype}				
\usepackage{subfigure} 
\usepackage{hyperref}
\usepackage{cleveref}					
\usepackage{tikz}
\usepackage{siunitx}
\sisetup{range-phrase = {\text{$\ -\ $}}}

\usepackage{siunitx}

\date{January 2024}

\newcommand{\be}{\begin{equation}}
\newcommand{\ee}{\end{equation}}
\newcommand{\bea}{\begin{eqnarray}}
\newcommand{\eea}{\end{eqnarray}}
\newcommand{\beaa}{\begin{eqnarray*}}
\newcommand{\eeaa}{\end{eqnarray*}}
\newtheorem{lemma}{Lemma}
\newtheorem{theorem}{Theorem}

\newcommand{\n}{{n}}
\newcommand{\p}{{p}}

\newcommand{\cc}{{c}}
\newcommand{\vv}{{v}}

\definecolor{darkred}{rgb}{0.6,0,0}


\title{Forward lateral photovoltage scanning problem: Perturbation approach and existence-uniqueness analysis}

\author{
    Giuseppe Alì \\
    Department of Physics, University of Calabria \\
    via Pietro Bucci 30/B, Arcavacata di Rende, I-87036, Cosenza, Italy \\
    {\tt giuseppe.ali@unical.it}
    \and
    Patricio Farrell \\
    Weierstrass Institute Berlin \\
    Mohrenstr. 39, 10117 Berlin, Germany \\
    {\tt patricio.farrell@wias-berlin.de}
    \and
    Nella Rotundo \\
    Dipartimento di Matematica e Informatica 'Ulisse Dini' \\
    University of Florence \\
    Viale Morgagni 67/A, 50134 Florence, Italy \\
    {\tt nella.rotundo@unifi.it}
}

\date{} 

\begin{document}

\maketitle

\begin{abstract}
In this paper, we present analytical results for the so-called forward lateral photovoltage scanning (LPS) problem. The (inverse) LPS  model predicts doping variations in crystal by measuring the current leaving the crystal generated by a laser at various positions. The forward model consists of a set of nonlinear elliptic equations coupled with a measuring device modeled by a resistance. Standard methods to ensure the existence and uniqueness of the forward model cannot be used in a straightforward manner due to the presence of an additional generation term modeling the effect of the laser on the crystal. Hence, we scale the original forward LPS problem and employ a perturbation approach to derive the leading order system and the correction up to the second order in an appropriate small parameter. While these simplifications pose no issues from a physical standpoint, they enable us to demonstrate the analytic existence and uniqueness of solutions for the simplified system using standard arguments from elliptic theory adapted to the coupling with the measuring device.
\end{abstract}



\emph{Keywords}:
Drift-diﬀusion model,
Charge transport, 
Lateral photovoltage scanning method (LPS),
Perturbation analysis, 
Existence and uniqueness.

\section{Introduction}




Estimating crystal inhomogeneities in semiconductors noninvasively holds significance for various industrial applications. For example, understanding and avoiding doping fluctuations is important if very pure semiconductor crystal are required in industrial applications \cite{Zulehner1994}. However, such fluctuations, often referred to as striations, may not always have a negative impact which may be the case for grain boundaries in solar cells \cite{Randle2010}.  

Doping variations lead to local electrical fields. In order to detect  inhomogeneities, variations, and defects in the doping profile noninvasively, one may systematically generate electron-hole pairs through some form of electromagnetic radiation and attach an external circuit to measure the resulting current. Due to the local fields created by doping inhomogeneities, charge carriers tend to redistribute in the surrounding region of the excitation to minimize energy. The excess charge carriers then traverse the external circuit, generating a measurable current.
Scanning the semiconductor sample with the electromagnetic source at different positions allows one to visualize the distribution of electrically active charge-separating defects and variations in the doping profile along the scan locations.
Several different types of technologies rely on this physical principle such as 
Electron Beam Induced Current (EBIC)~\cite{Wittry1967}, Laser Beam Induced
Current (LBIC)~\cite{Bajaj1987}, scanning photovoltage
(SPV)~\cite{Jastrzebski1982}, and Lateral Photovoltage Scanning
(LPS)~\cite{Luedge1997, Kayser2020b, Kayser2021,Manganelli2022}. 
While EBIC uses localized electron beams as the source of electromagnetic radiation, LBIC, SPV and LPS use laser beams. 

In this paper, we focus on the LPS method. A nonlinear drift-diffusion model for the forward problem was mathematically formulated in \cite{Farrell2021}, where the coupling between the charge transport in the crystal and the external circuit for the voltmeter is realized through an implicit boundary condition. This forward model computes current/voltages at the contacts for given doping variations. However, no analytical existence results were presented. The main difficulty in establishing the existence is due to the generation rate on the right-hand side of the continuity equations which impedes the direct use of standard arguments \cite{Ali2010,Markowich}. Known existence results in the literature for related PDE models require the generation term to be small \cite{Busenberg1993}. However, the authors in \cite{Farrell2021} demonstrated through simulations based on the Voronoi finite-volume method \cite{Farrell2017,Patriarca2018,Farrell2017c,Koprucki2015,Abdel2020b} that, on the one hand, numerically the system also converges for higher laser powers and that, on the other hand, the electron density is several orders of magnitude larger than the hole density. In this paper, we make use of the second fact by deriving an appropriately scaled  unipolar model and identifying the relevant dimensionless small parameters. However, setting the smallest parameter to zero, oversimplifies the physics in the sense that the coupling between the circuit and the charge transport model is ignored. Therefore, through a perturbation approach the leading order corrections up to the second order in the small parameter are derived. The resulting system is simpler than the original LPS model. However, for realistic applications, for example studied in \cite{Farrell2021,Luedge1997, Kayser2020b}, these simplifications are reasonable since the perturbation parameters for silicon and gallium arsenide are small. Thus, these simplifications are not problematic from a physical point of view. Moreover, from a mathematical point of view, we have the additional benefit that the leading order model is a decoupled nonlinear elliptic system and the second order correction is a linear elliptic system coupled with the external circuit. The existence and uniqueness of solutions for the simplified model can be shown by using arguments from elliptic PDE theory modified to include the coupling with the external circuit. While this paper focuses on the mathematical analysis of the forward problem, it is worth noting that numerical, data-driven strategies exist to address the more complicated inverse problem \cite{Piani2022}, that is, the prediction of the doping variations in a crystal by measuring the current generated by a laser beam impinging at various positions of the crystal.

The remainder of this paper is organized as follows: In Section \ref{sec:basic_model}, we state the original LPS model. The model is then scaled in Section \ref{sec:scaling} and the asymptotic simplified model is derived in Section \ref{sec:expansion}. In Section \ref{sec:existence}, we show the existence and uniqueness of solutions for the asymptotic PDE system and draw some conclusions in Section \ref{sec:conclusion}.

\section{Lateral Photovoltage Scanning method}
\label{sec:basic_model}
In this section, we present the model for the LPS method.
The description of the model strongly relies on \cite{Farrell2021}. 

\subsection{Presentation of the model}
We represent the silicon crystal as a confined domain $\Omega\subset \mathbb{R}^d$, where $d=1,2,3$. The doping profile, denoted by $C(\boldsymbol{x})$ for $\boldsymbol{x}\in \Omega$, reflects the difference between donor and acceptor concentrations.

Within the crystal, two charge carriers exist: electrons with a negative elementary charge $-q$ and holes with a positive elementary charge $q$. Charge transport is described by the electrostatic potential $\psi(\boldsymbol{x})$ and quasi Fermi potentials $\varphi_{\n}(\boldsymbol{x})$ and $\varphi_{\p}(\boldsymbol{x})$ for electrons and holes, respectively, which are governed by the steady-state van Roosbroeck model:
\begin{equation}
  \begin{split}
-\nabla \cdot (\varepsilon \nabla \psi) &= 
q(p-n+ C(\boldsymbol{x}))
\\
- \nabla\cdot ( \mu_\n  n \nabla \varphi_\n ) &= R -G(\boldsymbol{x}), \qquad
\\
- \nabla\cdot (\mu_\p  p \nabla \varphi_{\p}) &= G(\boldsymbol{x})-R. \qquad 
\end{split}
\label{eq:vR-model}
\end{equation}
The first equation describes a self-consistent electric field through a nonlinear Poisson equation. The subsequent continuity equations describe charge transport in the crystal, with constant permittivity $\varepsilon$ and mobilities $\mu_n$ and $\mu_p$. Assuming Boltzmann statistics, the relations between electron and hole densities $n$ and $p$, and the quasi-Fermi potentials, are established by:
\begin{equation}
\begin{split}
 n(\psi,\varphi_\n)&= N_\cc \exp \left( \frac{q(\psi - \varphi_\n)-E_{\cc}}{k_B T}\right),
\\
 p(\psi,\varphi_\p)&= N_\vv \exp \left( \frac{ q(\varphi_\p - \psi) + E_{\vv}}{k_B T}\right).
\label{eq:dens-pot}
\end{split}
\end{equation}

The effective conduction and valence band densities of states are denoted by $N_\cc$ and $N_\vv$, while the Boltzmann constant and temperature are represented by $k_B$ and $T$. The constant conduction and valence band-edge energies are denoted as $E_{\cc}$ and $E_{\vv}$.

The current densities for electrons and holes, $\boldsymbol{J}_\n(\boldsymbol{x})$ and $\boldsymbol{J}_\p(\boldsymbol{x})$, are expressed as $-q \mu_\n n \nabla \varphi_\n$ and $-q \mu_\p p \nabla \varphi_{\p}$, respectively. Utilizing the relations \eqref{eq:dens-pot}, their drift-diffusion form is
\begin{equation}
\boldsymbol{J}_\n = -q \mu_\n  ( n \nabla \psi- V_{\mathrm{th}} \nabla n), \qquad
 \boldsymbol{J}_\p =   -q \mu_\p  (p  \nabla \psi +  V_{\mathrm{th}} \nabla p ),
 \label{eq:curr-dd}
\end{equation}
where $V_{\mathrm{th}} = k_BT/q$ is the thermal voltage. 

Recombination $R$ and generation rates $G$ are elaborated in subsequent sections. 

The system \eqref{eq:vR-model} is complemented by mixed boundary conditions. The boundary $\partial\Omega$ comprises two disjoint parts, $\Gamma_N$ and $\Gamma_D$. Neumann boundary conditions on $\Gamma_N$ are given by
\begin{equation}
\frac{\partial \psi}{\partial \boldsymbol{\nu}}=
\frac{\partial \varphi_\n}{\partial \boldsymbol{\nu}}=
\frac{\partial \varphi_\p}{\partial \boldsymbol{\nu}}=0,  \quad \text{on }\Gamma_N,
\label{eq:neumann-data}
\end{equation}
where $\partial/\partial \boldsymbol{\nu} =\boldsymbol{\nu} \cdot \nabla$ is the normal derivative along the external unit normal $\boldsymbol{\nu}$.
On $\Gamma_D$, Dirichlet-type boundary conditions model ohmic contacts
\begin{equation}
\psi - \psi_0= \varphi_\n = \varphi_\p = u_{D_i} - u_{\text{ref}}\quad \text{on }\quad\Gamma_{D_i}, i=1,2,
\label{eq:Dirichlet-data}
\end{equation}
where $ \psi_0$ is the local electroneutral potential (see \cite{Markowich} for details),
\be
\psi_0 := u_{D_i}+\frac{E_{\cc}}{q}+\frac{k_B T}{q}\log\left(\frac{n_0}{N_\cc}\right),
\quad
n_0:=\frac{1}{2}\left(C+\sqrt{C^2+4n_{\text{i}}^2}\right).
\ee
The parameter $n_{\text{i}}$ refers to the intrinsic carrier density defined by
\begin{equation}
  n_{\text{i}}^2 = N_c N_v\exp\left(-\frac{E_c-E_v}{k_B T}\right).
\label{eq:intrinsic_densitiy}
\end{equation} 
The terms $u_{D_i}$ represent contact voltages at the respective ohmic contacts, with a common practice of defining a reference potential $u_{\text{ref}}$.
The total electric current $j_{D_i}$ flowing through the $i$-th ohmic contact $\Gamma_{D_i}$ is defined by the surface integral
\begin{equation}
j_{D_i}  =-\int_{\Gamma_{D_i}}   \boldsymbol{\nu} \cdot (\boldsymbol{J}_\n (\boldsymbol{x})+\boldsymbol{J}_\p (\boldsymbol{x}))  d\sigma(\boldsymbol{x}),
\qquad i=1,2.
\label{eq:current-i-th-ohmic}
\end{equation}
Following charge conservation, the currents in \eqref{eq:current-i-th-ohmic} satisfy:
\begin{equation}
\sum_{i=1}^2 j_{D_i}=0 \qquad \Rightarrow \qquad  j_{D_1}=- j_{D_2}=:i_D.
\label{eq:definition_iD}
\end{equation}

The voltage meter is simplified as a circuit with resistance $\mathcal{R}$, depicted in \Cref{fig:scheme-circ}. The potential difference between nodes $u_{D_1}$ and $u_{D_2}$ appearing in \eqref{eq:Dirichlet-data}, is given by 
\begin{equation}
u_{D_2} 
- u_{D_1} 
=\mathcal{R}\, i_D(u_{D_2}),
\label{eq:network-potentials}
\end{equation}
where $i_D$ is defined in \eqref{eq:definition_iD}. 
By assuming that one node of an electric circuit is grounded, we can freely assign $u_{D_1} = u_{\text{ref}} = 0$, resulting in the simplification of \eqref{eq:network-potentials} to:
\begin{equation}
u_{D_2} 
=\mathcal{R}\, i_D(u_{D_2}).
\label{eq:network-potentials-2}
\end{equation}

Equation \eqref{eq:network-potentials-2} is an implicit equation for $u_{D_2}$, given that $i_D$ depends implicitly on $u_{D_2}$ via the van Roosbroeck system \eqref{eq:vR-model}. After obtaining a solution $(\psi, \varphi_n, \varphi_p)$ to the van Roosbroeck system, where $u_{D_2}$ influences the Dirichlet boundary condition \eqref{eq:Dirichlet-data}, we compute the current $i_D$ using \eqref{eq:current-i-th-ohmic}. Further insights into this coupling can be found in \cite{ABGT} and \cite{Ali2010}. To simplify the notation, we will use $u_D$ instead of $u_{D_2}$.

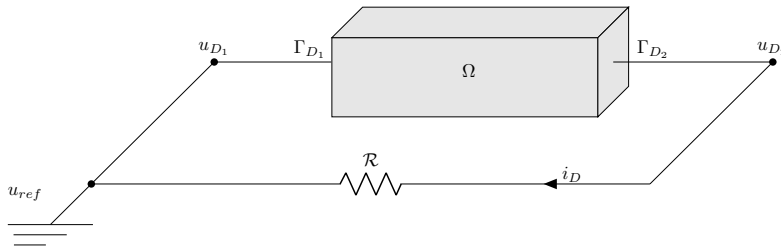
\begin{figure}

    \begin{center}
    \scalebox{0.7}{
\begin{tikzpicture}
    \coordinate (A) at (3,0,0);
    \coordinate (B) at (8,0,0);
    \coordinate (C) at (3,1.5,0);
    \coordinate (D) at (8,1.5,0);
    \coordinate (E) at (3,0,1.5);
    \coordinate (F) at (8,0,1.5);
    \coordinate (G) at (3,1.5,1.5);
    \coordinate (H) at (8,1.5,1.5);
    \coordinate (OM) at (5,0.3,0);
    \draw[thin,fill=gray!20] (E) -- (F) -- (H) -- (G) -- cycle;
    \draw[thin,fill=gray!20] (H) -- (G) -- (C) -- (D) -- cycle;
    \draw[thin,fill=gray!20] (H) -- (D) -- (B) -- (F) -- cycle;
    \coordinate (Center1) at (12,3,0.5);
    \node at (OM) {$\Omega$};
    \draw(8, 0.75, 0.75) -- (11, 0.75, 0.75);
    \draw(2.7, 0.75, 0.75) -- (0.5, 0.75, 0.75);
    \draw(0.5, 0.75, 0.75) -- (0.5, 0.75, 8.75);

    \draw(-0.3, 0.75, 8.75) -- (1.3, 0.75, 8.75);
    \draw(0, 0.75, 9.25) -- (1, 0.75, 9.25);
    \draw(0.2, 0.75, 9.75) -- (0.8, 0.75, 9.75);
   
     \draw(11, 0.75, 0.75) -- (11, 0.75, 6.75);

    \draw (0.5, 0.75, 6.75) to[R, l=$\mathcal{R}$] (11, 0.75, 6.75);
   \fill[black] (9, 0.75, 6.75) -- (9.25, 0.65, 6.75) -- (9.25, 0.85, 6.75) -- cycle;
   
\node at (9.25, 0.65, 6) {$i_D$};
\node at (8.75,1,0.75)  {$\Gamma_{D_2}$};
\node at (2.3,1,0.75)  {$\Gamma_{D_1}$};
\node at (0.5,1,0.75)  {$u_{D_1}$};
\node at (11,1,0.75)  {$u_{D_2}$};
\node at (-0.3,1,7.9)  {$u_{ref}$};
    \fill (0.5,0.75,0.75) circle (2pt);
\fill (11,0.75,0.75) circle (2pt);
\fill (0.5, 0.75, 6.75) circle (2pt);

\end{tikzpicture}}
\caption{A photo-sensitive semiconductor crystal $\Omega$ coupled to a voltmeter with  resistance $\mathcal{R}$.
}
\label{fig:scheme-circ}
\end{center}

\end{figure}

The total recombination rate, denoted as $R$, includes three mechanisms:
\begin{equation}
R= R_{\si{dir}} + R_{\si{Aug}} + R_{\si{SRH}},
\label{eq:reco}
\end{equation}
where $R_{\si{dir}}$, $R_{\si{Aug}}$, and $R_{\si{SRH}}$ represent direct recombination, Auger recombination, and Shockley-Read-Hall (SRH) recombination, respectively, defined as:
\begin{equation}
\begin{split}
 R_{\si{dir}} &=C_d(np-n_{\text{i}}^2),
\\
  R_{\si{Aug}} &=C_{\n}n(np-n_{\text{i}}^2)+C_{\p}p(np-n_{\text{i}}^2),
\\
  R_{\si{SRH}} &=\frac{np-n_{\text{i}}^2}{\tau_p(n+n_T)+\tau_n(p+p_T)},
  \end{split} 
\end{equation}
where $C_d$, $C_n$ and $C_p$ are material-related constants. 
The SRH recombination process, which involves the trapping and release of charge carriers in semiconductors, differs from the Auger recombination process due to unintentional inclusion of elements during fabrication. This unintentional inclusion affects the properties of the semiconductor material, such as the lifetimes of charge carriers ($\tau_n$ for electrons, $\tau_p$ for holes) and the reference densities of charge carriers ($n_T$ for electrons, $p_T$ for holes). For later use we define the function 
\begin{equation}
  r(n,p):=C_d+C_{\n}n+C_{\p}p+\frac{1}{\tau_p(n+n_T)+\tau_n(p+p_T)}
  \label{eq:def-r}
\end{equation}
so that $R=r(n,p)(np-n_{\text{i}}^2)$. 

The electromagnetic source  is modeled by the generation term $G(\mathbf{x}; \mathbf{x}_0)$. When the laser hits the crystal at the point $\mathbf{x}_0 := (x_0, y_0, z_0)^T$, some photons are \textit{impinged} and create electron-hole pairs, resulting in a  generation rate defined as follows
\begin{equation}
G(\mathbf{x}; \mathbf x_0)=\kappa S(\mathbf{x}-\mathbf{x}_0),
\label{eq:generation-rate}
\end{equation}
where $S(\mathbf{x})$ is the shape function of the laser (normalized by $\int_{\mathbb{R}^3}
S(\mathbf{x}){d}\mathbf{x}=1$), while $\kappa$ is a constant given by
$\kappa := \frac{P\lambda_L}{h c}(1-\rho)$, where $c$ is the velocity of light in vacuum. 
Here, $P$ is the laser power, $\lambda_{{L}}$ is the wave length of the laser, $h$ is the Planck constant, and $\rho$ is the reflectivity rate of the crystal.

We assume that area of influence of the electromagnetic source decays exponentially fast from the incident point $\mathbf{x}_0$. In particular, we take a laser profile function $S$ defined as
\begin{equation}
    S(\mathbf{x}) := \frac{1}{2\pi\sigma_L^2 d_A}
    \exp\left[-\frac12 \left( \frac{x}{\sigma_L}\right)^2 \right]
    \exp\left[-\frac12 \left( \frac{y}{\sigma_L}\right)^2 \right]
    \exp\left[-\frac{|z|}{d_{A}}\right].
\label{eq:shapeS}
\end{equation}
Here $\sigma_L$ is the laser spot radius, while $d_{A}$ is the penetration depth
(or the reciprocal of the absorption coefficient), which  heavily depends on the
laser wave length.

\begin{figure}[h!]
	\centering  
  \hfill
  \includegraphics[width=0.49\columnwidth]{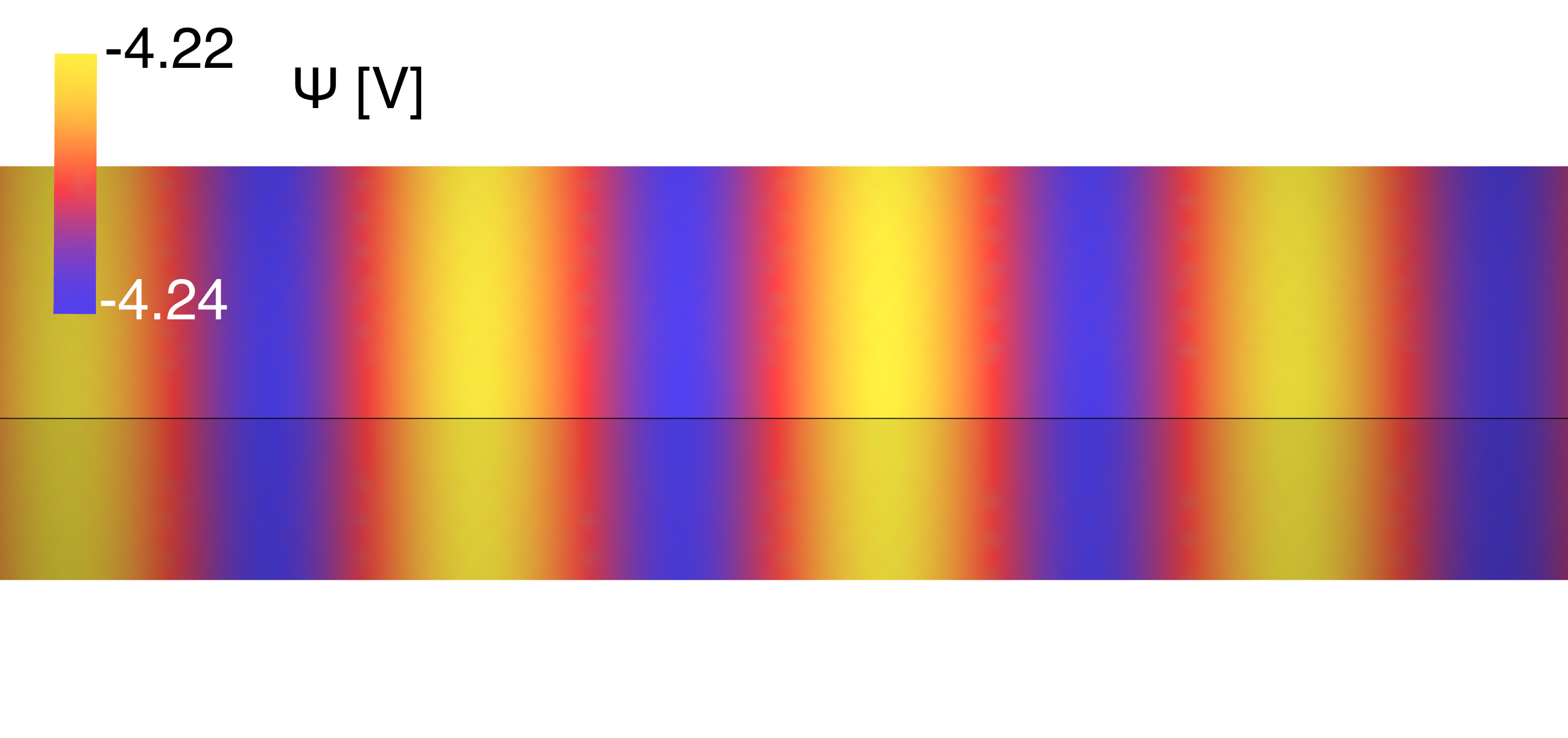}
  \hfill
  \includegraphics[width=0.49\columnwidth]{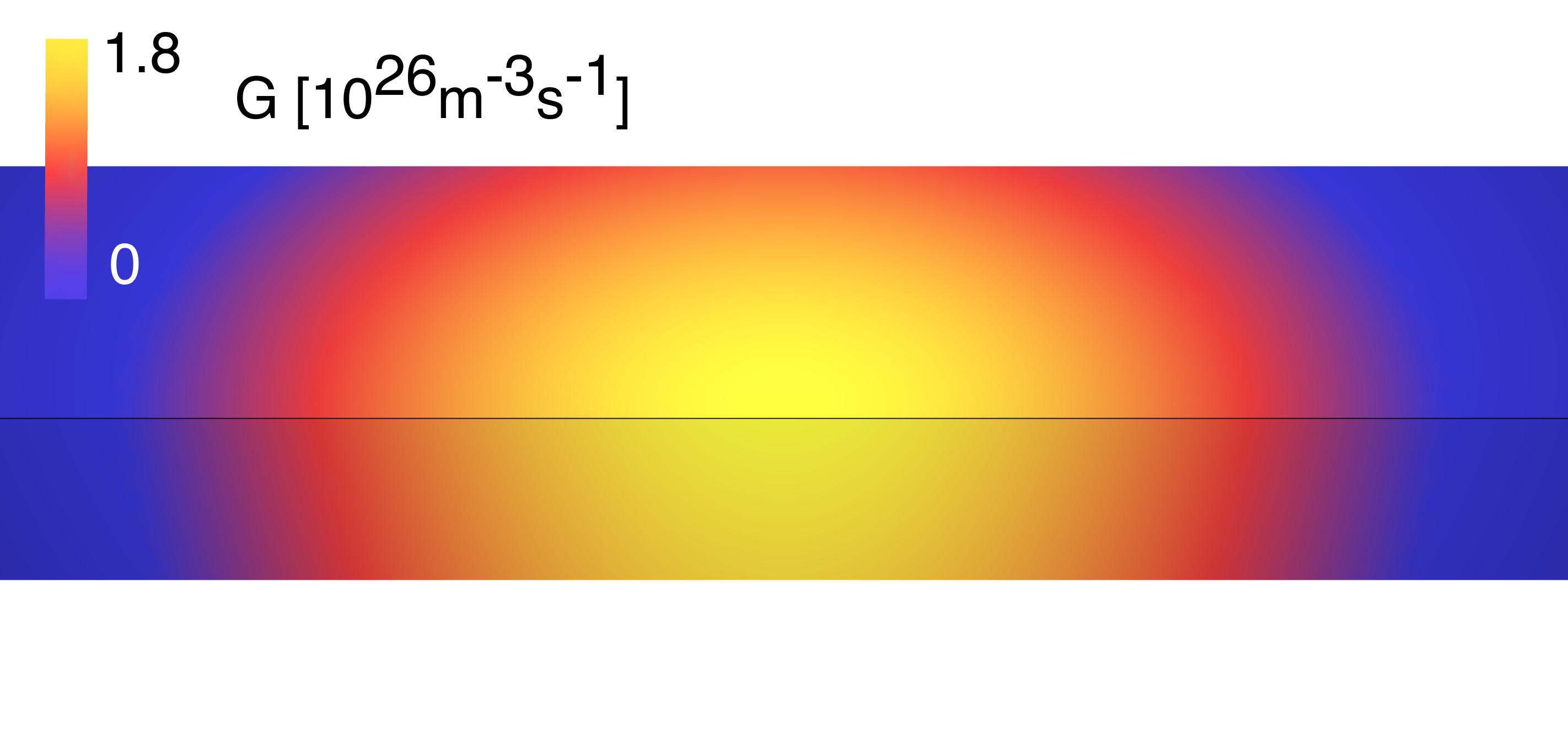}
  \hfill
  \\
  \hfill
  \includegraphics[width=0.49\columnwidth]{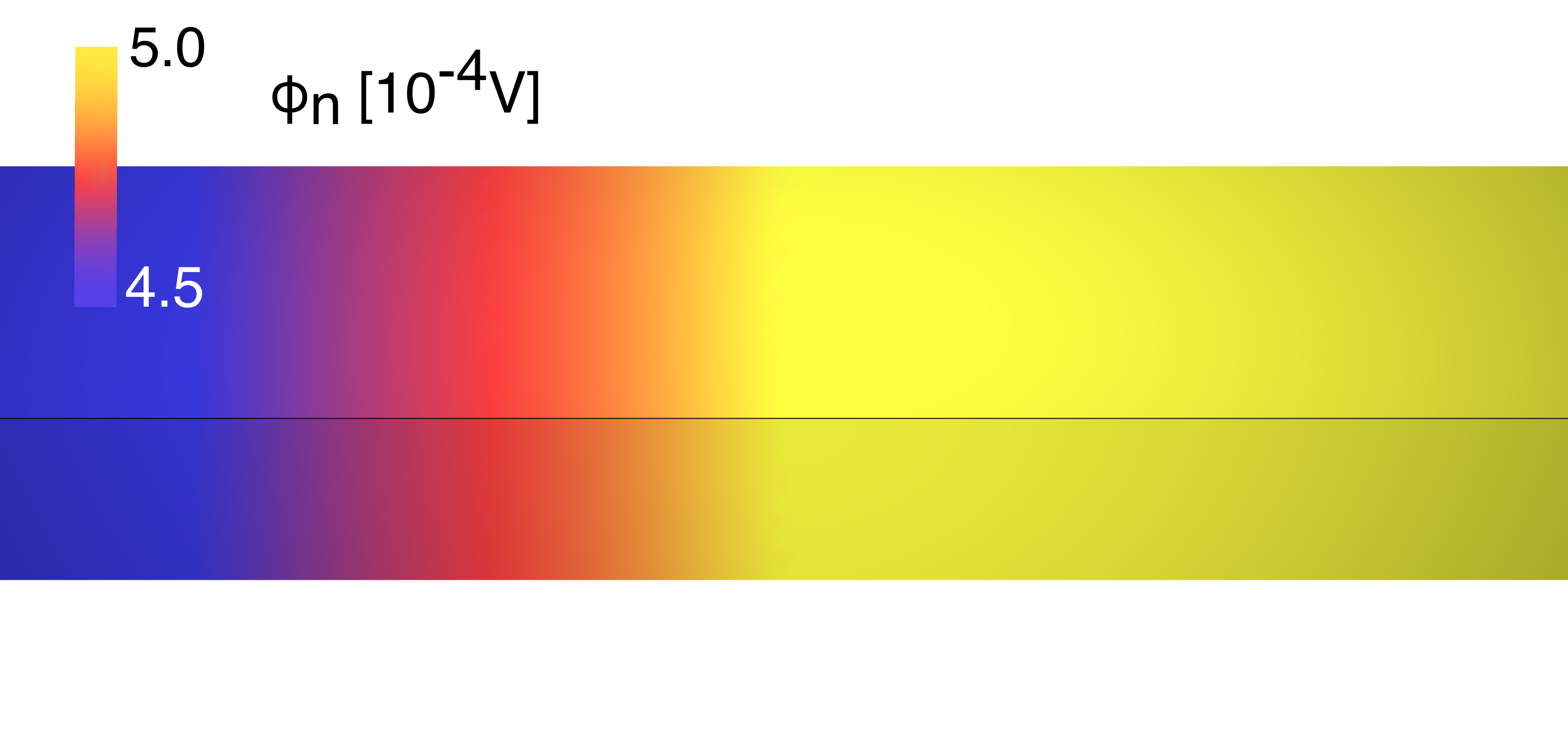}
  \hfill
  \includegraphics[width=0.49\columnwidth]{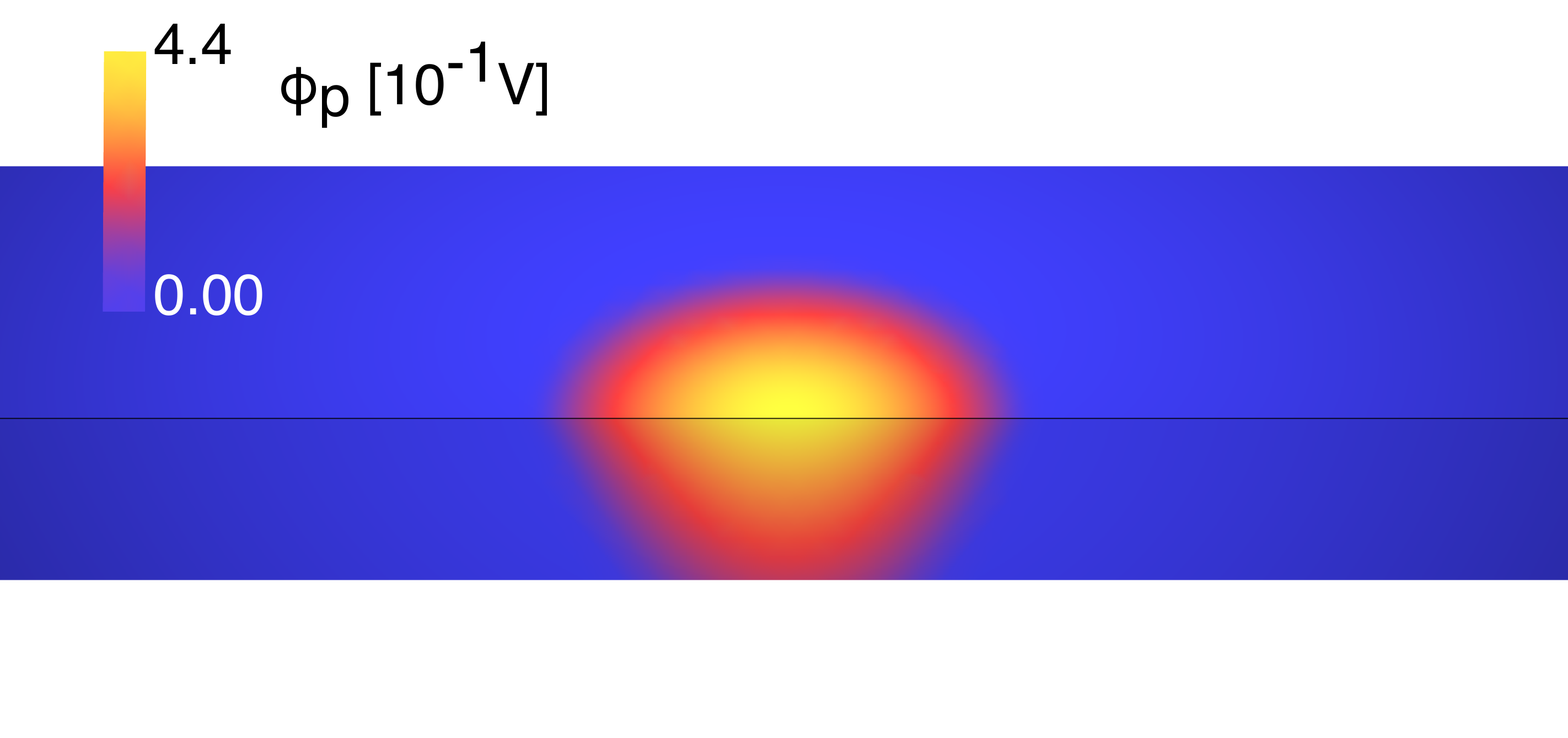}
    \\
  \hfill
  \includegraphics[width=0.49\columnwidth]{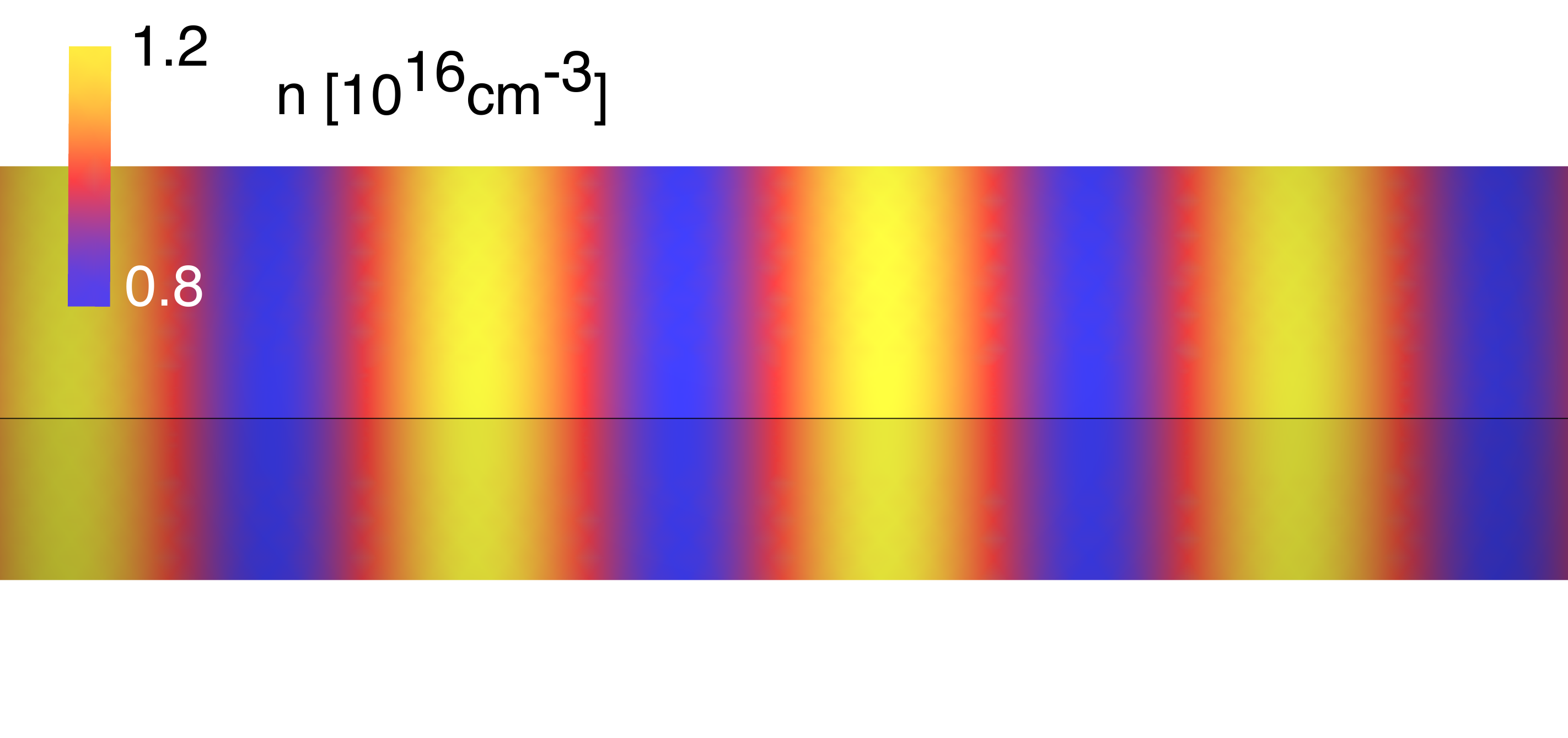}
  \hfill
  \includegraphics[width=0.49\columnwidth]{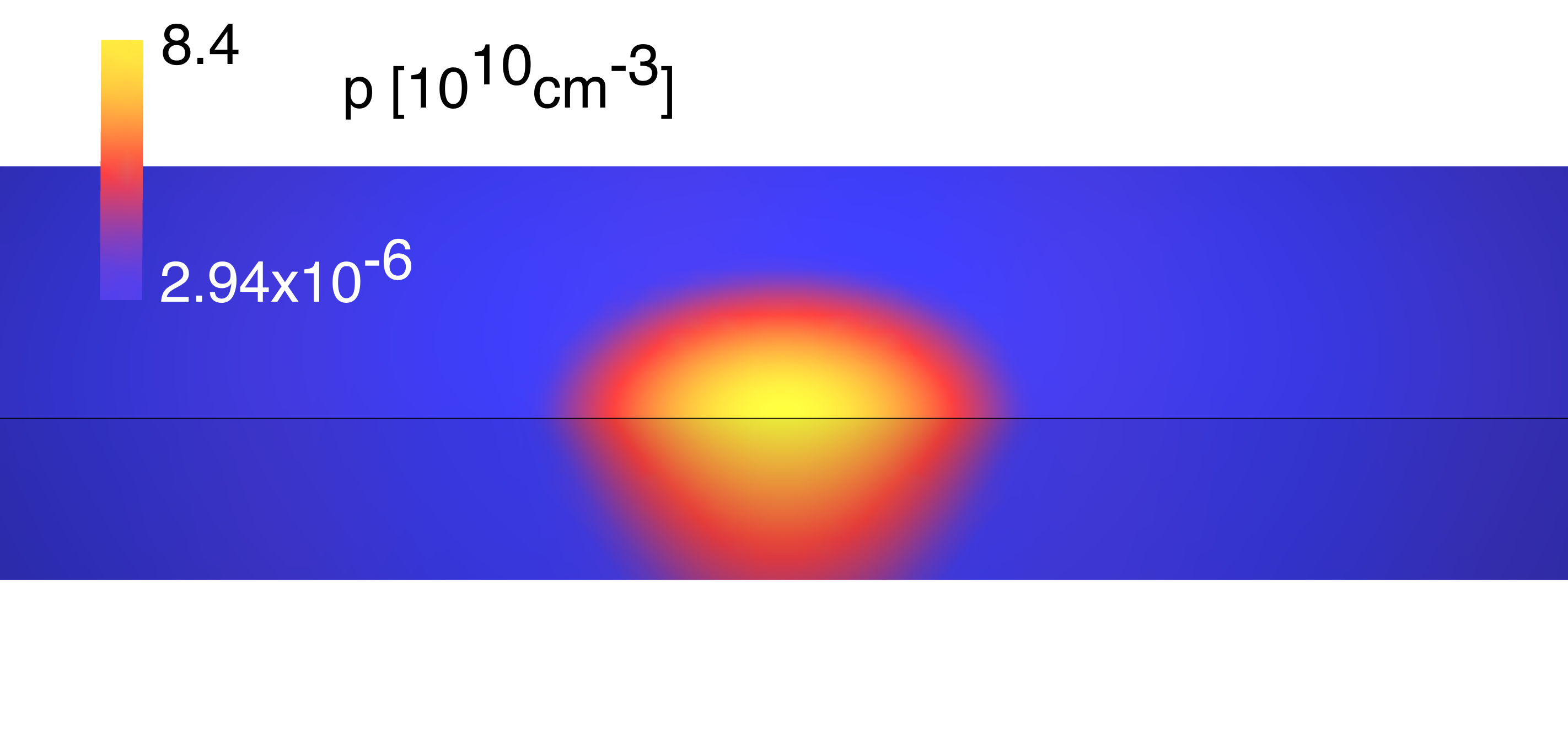}
\caption{For a silicon crystal, we show the electrostatic potential $\psi$ (upper left panel), the resulting generation rate $G$ (upper right panel), the quasi Fermi potentials $\varphi_n$,$\varphi_p$ for electrons and holes (middle panels). The related charge carrier densities $n,p$ are shown in the bottom panels.}
\label{fig:3D_simulation-figures_Si} 
\end{figure}

\subsection{Numerical simulations}
For the above model, we performed  numerical simulation with \texttt{ddfermi} \cite{ddfermi} of a donor-doped semiconductor crystal irradiated by a laser beam, both for silicon (see Figure \ref{fig:3D_simulation-figures_Si}) and gallium arsenide (see Figure \ref{fig:3D_simulation-figures_GaAs}). We set the laser power $P=\SI{2}{mW}$, laser spot position $(x_0,y_0)^T = (\SI{0}{mm},\SI{0}{mm})^T$, average doping concentration $N_{D0}=\SI{1E16}{cm^{-3}}$ and $C(x,y,z) =  N_{D0} \left(1+0.2\sin\left(2\pi\frac{x}{\SI{100}{\micro\meter}}\right)\right)$.  
The additional parameters for both simulations can be found in \ref{sec:parameters}. The results show that the presence of the beam, modeled by the generation term $G$, greatly affects the minority charge carriers $p$, while the majority charge carriers $n$ are not significantly impacted. As a matter of fact, the sinusoidal pattern of the doping is reflected in the electron densitiy and the electrostatic potential alone.


\begin{figure}[h!]
	\centering  
  \hfill
  \includegraphics[width=0.49\columnwidth]{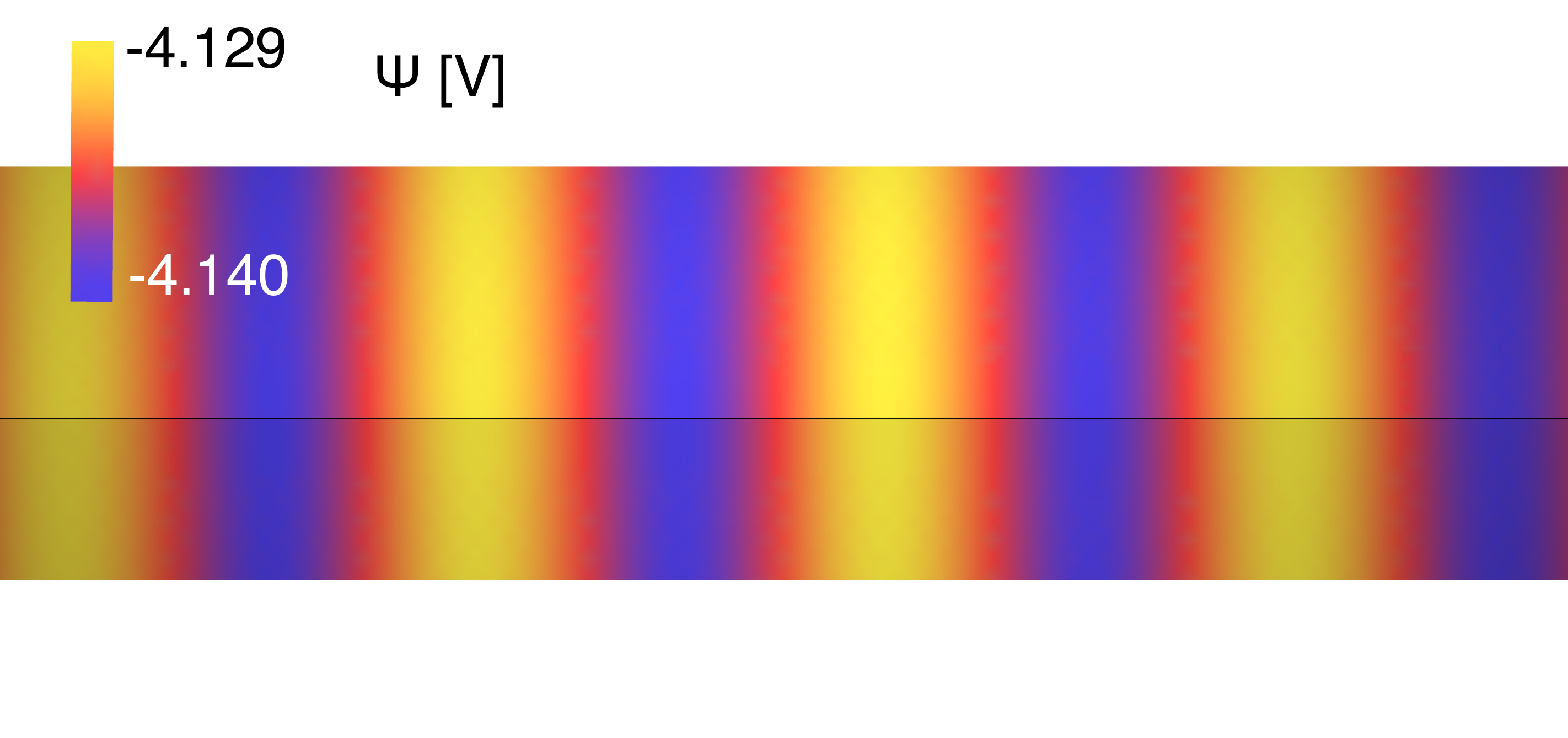}
  \hfill
  \includegraphics[width=0.49\columnwidth]{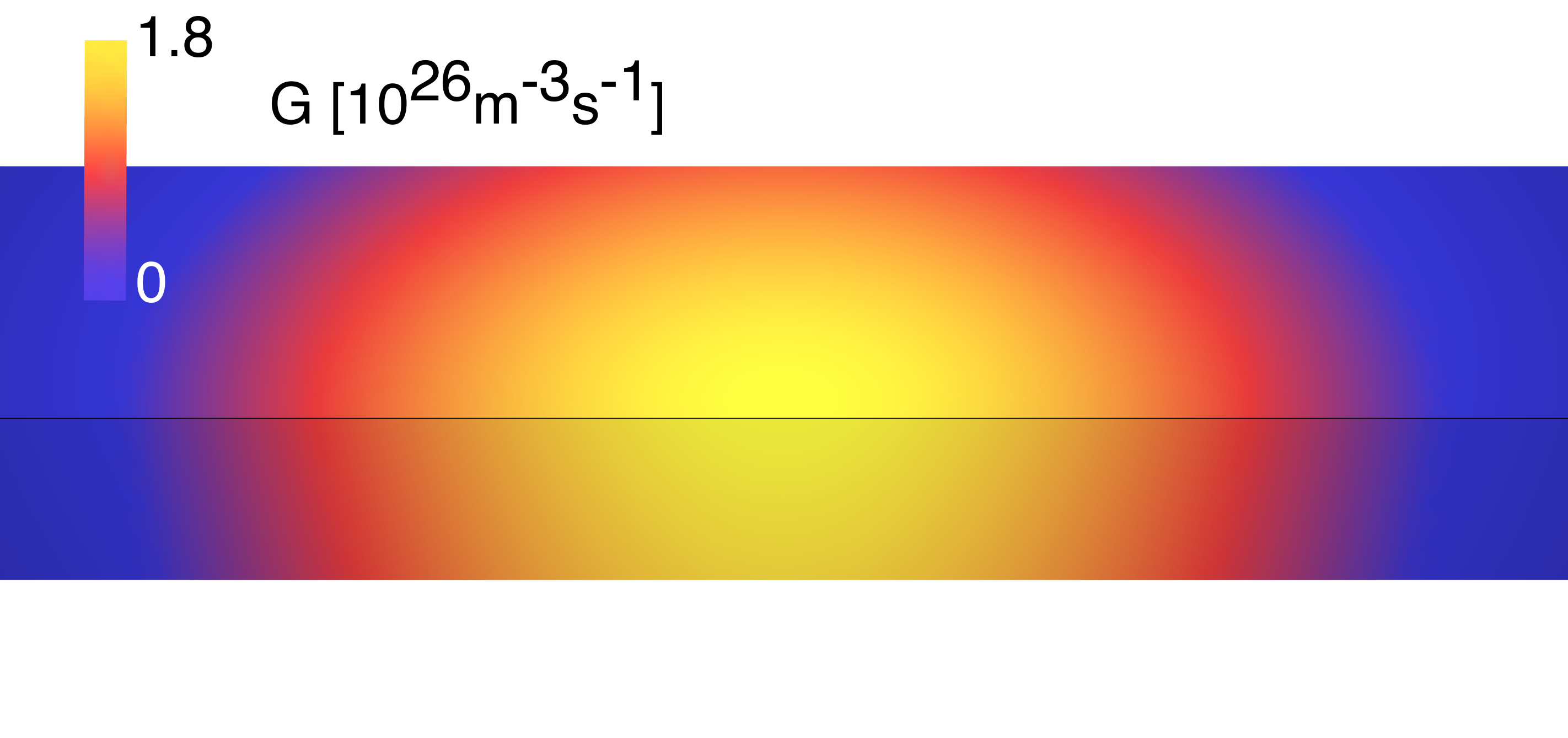}
  \hfill
  \\
  \hfill
  \includegraphics[width=0.49\columnwidth]{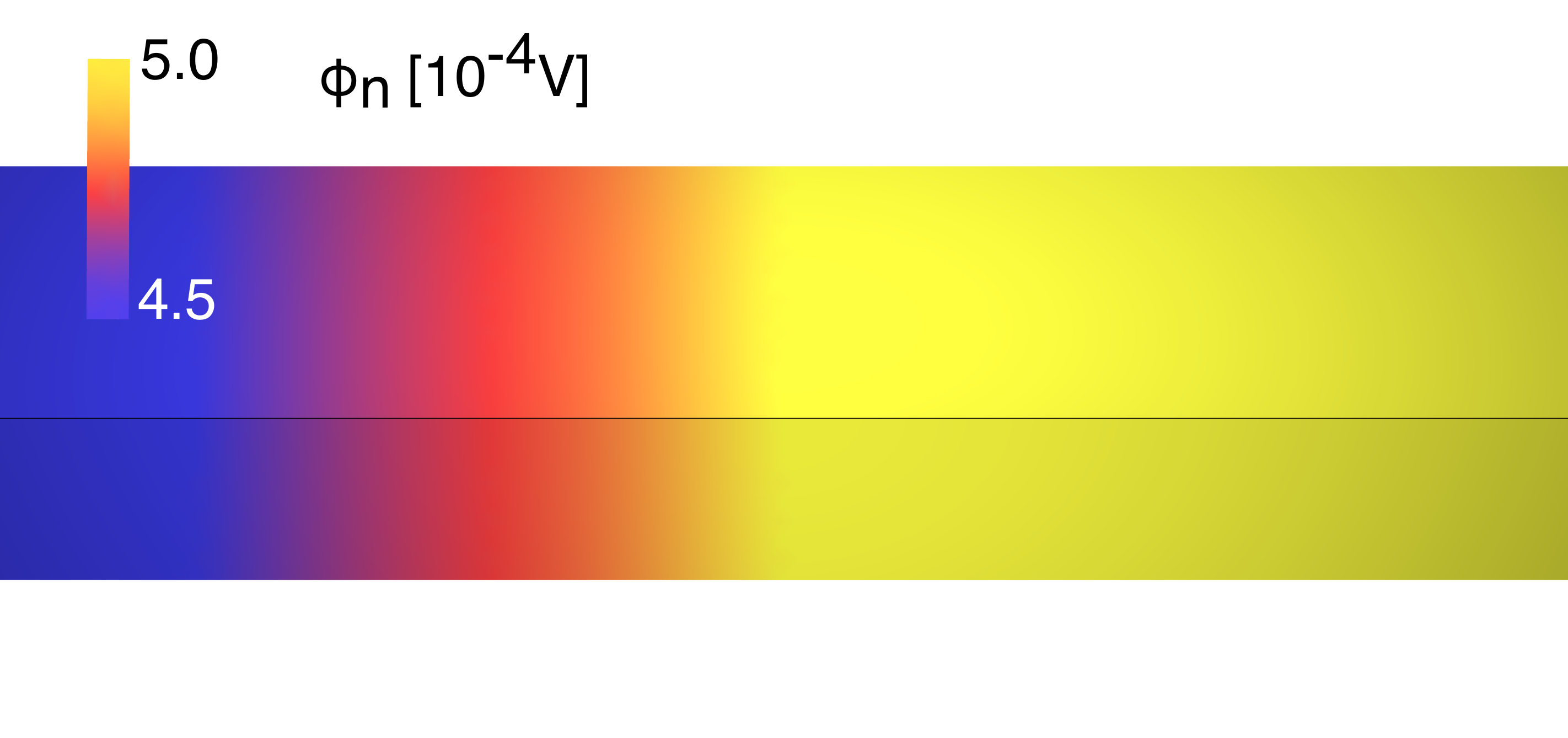}
  \hfill
  \includegraphics[width=0.49\columnwidth]{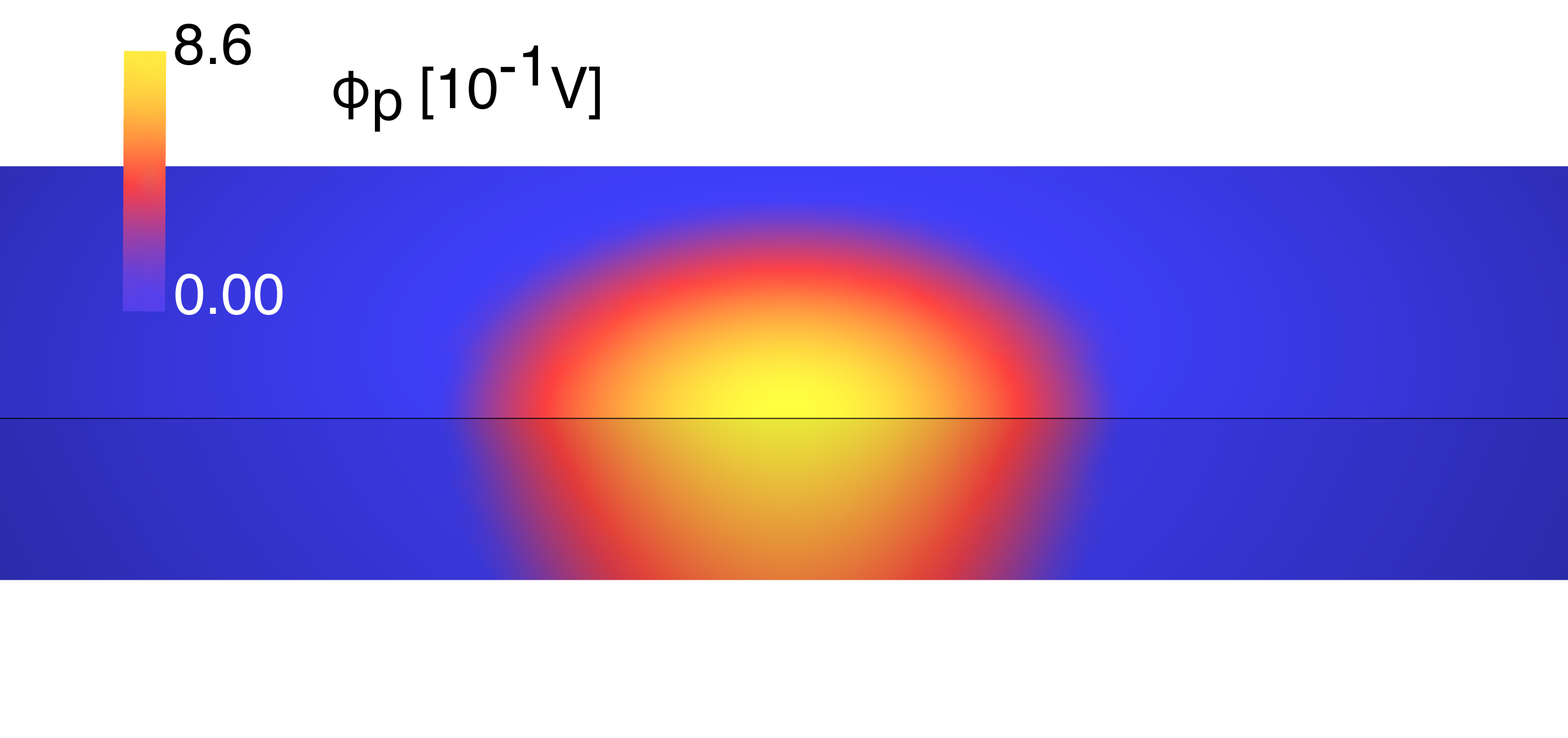}
    \\
  \hfill
  \includegraphics[width=0.49\columnwidth]{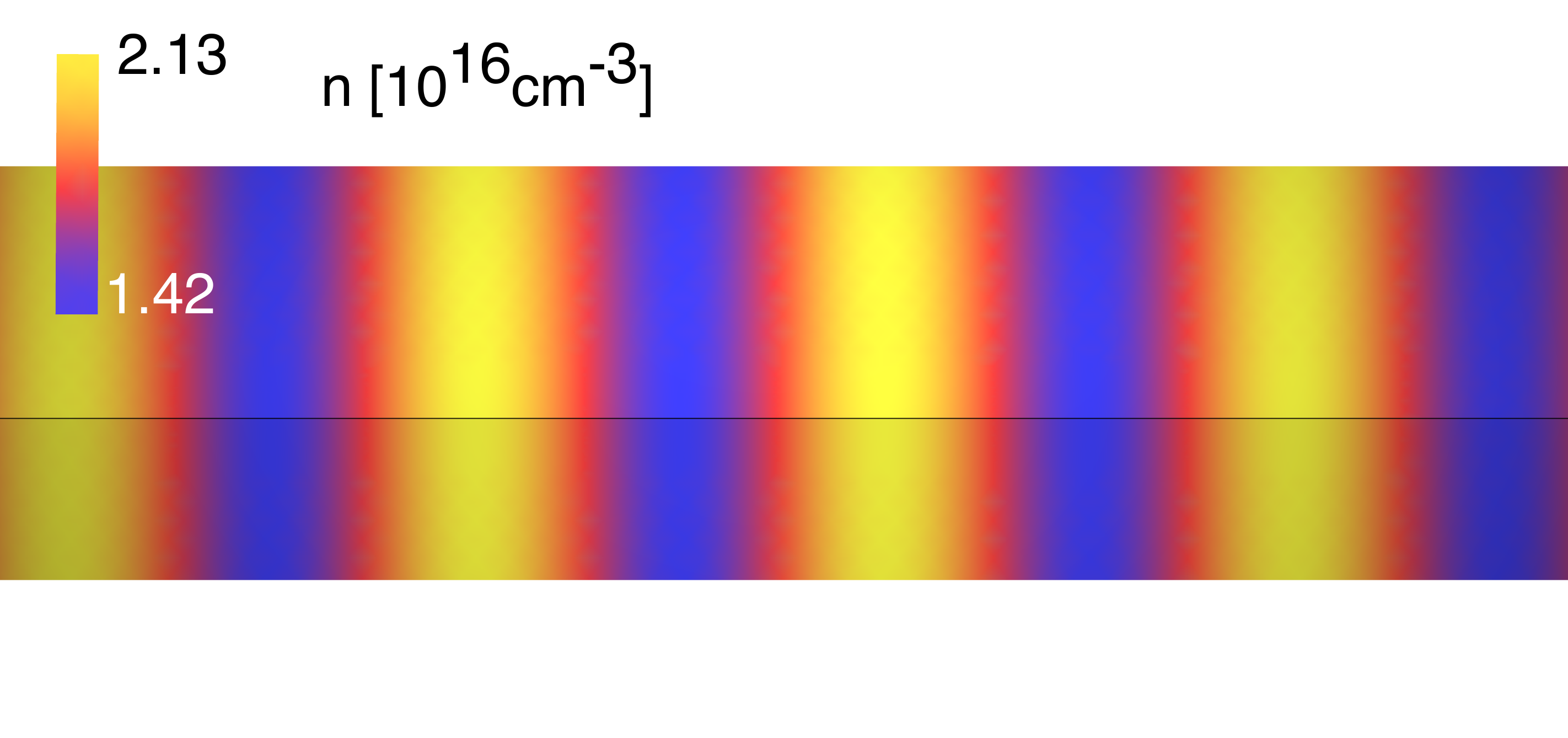}
  \hfill
  \includegraphics[width=0.49\columnwidth]{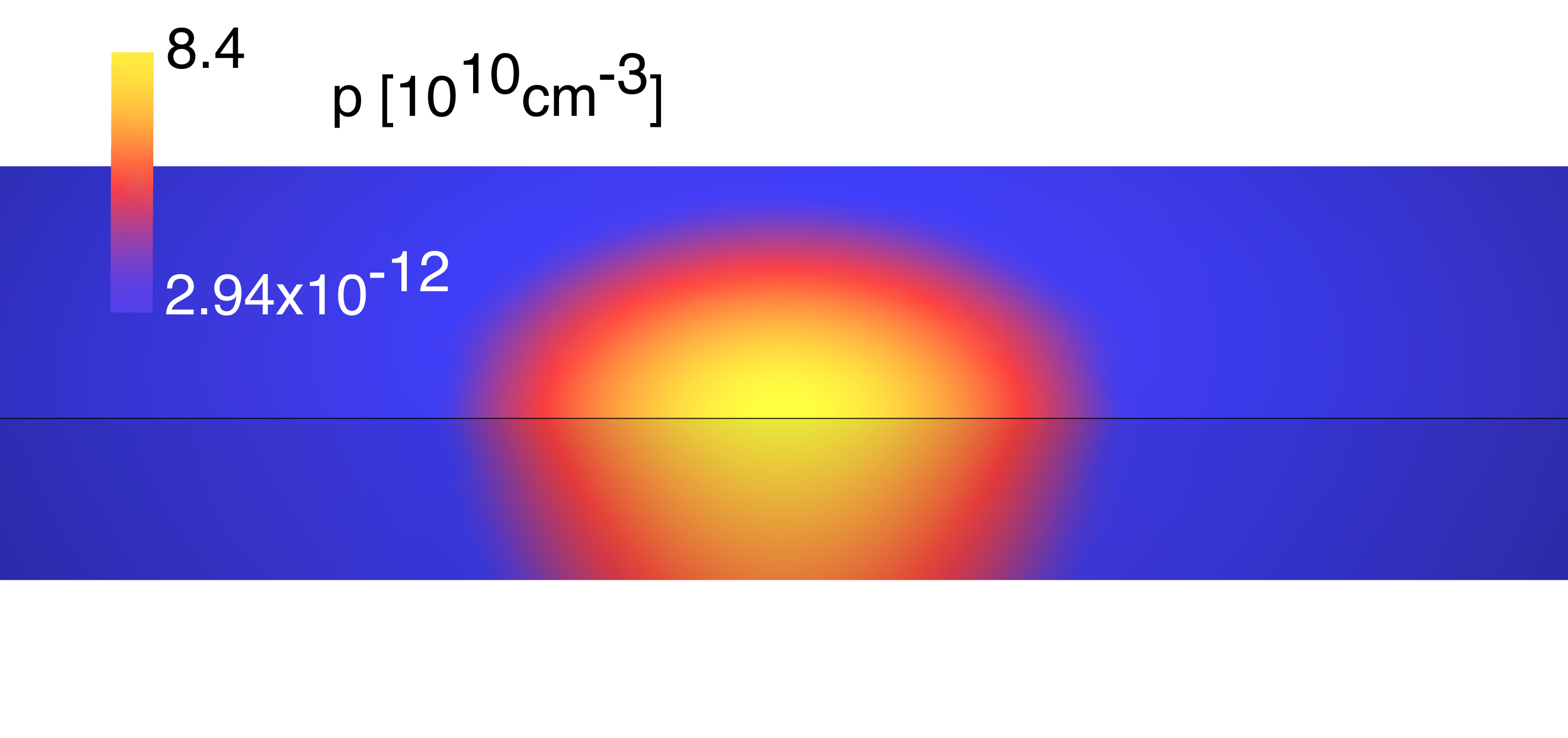}
\caption{For a gallium arsenide crystal, we show the electrostatic potential $\psi$ (upper left panel), the resulting generation rate $G$ (upper right panel), the quasi Fermi potentials $\varphi_n$,$\varphi_p$ for electrons and holes (middle panels). The related charge carrier densities $n,p$ are shown in the bottom panels.}
\label{fig:3D_simulation-figures_GaAs} 
\end{figure}

These numerical findings are the main motivation for a unipolar scaling of the full model, also encompassing the electric network variables, which will be detailed in the following section.

\section{Scaling of the model for unipolar devices}
\label{sec:scaling}

We consider a scaling of the model \eqref{eq:vR-model} which is suitable for unipolar devices. In this case, the density of the majority charge carriers is of the same order as the doping profile, while the density of the minority charge carriers is much smaller. Nevertheless, both charge carriers contribute to the total current, especially when the current is the result of an impinging laser beam.

In the following we consider an $n$-doped device, so that $C(\boldsymbol{x})>0$, and electrons are the majority charge carriers.
We write
\beaa
&&
\varphi_n=\bar{\psi}\hat{\varphi}_{n}-\varphi_0, \quad \varphi_p=\bar{\psi}\hat{\varphi}_{p}-\varphi_0, \quad \psi=\bar{\psi}\hat{\psi},
\quad \boldsymbol{x}=\bar{x}\hat{\boldsymbol{x}}, 
\\ && 
\quad \mu_n=\bar{\mu}\hat{\mu}_n, \quad \mu_p=\bar{\mu}\hat{\mu}_p, \quad \varepsilon({\boldsymbol{x}})=\bar{\varepsilon}\hat{\varepsilon}(\hat{\boldsymbol{x}}), \quad
C(\boldsymbol{x})=\bar{C}\hat{C}(\hat{\boldsymbol{x}}), \quad
\eeaa 
where $\bar{\psi}$, $\varphi_0$, $\bar{x}$, $\bar{\mu}$, $\bar{\varepsilon}$, $\bar{C}$  are reference values, while $\hat{\varphi}_{n}$, $\hat{\varphi}_{p}$, $\hat{\psi}$, $\hat{\boldsymbol{x}}$, $\hat{\mu}_n$, $\hat{\mu}_p$, $\hat{\varepsilon}$, $\hat{C}$,  are scaled quantities. We choose
\beaa
&&
\bar{C}=\sup_{\boldsymbol{x}\in\Omega} C(\boldsymbol{x}),
\quad 
\bar{\psi}=V_{\mathrm{th}}:=\frac{k_{B}T}{q},
\quad \varphi_0=\frac{E_\cc}{q}-V_{\mathrm{th}}\log\frac{N_\cc}{\bar{C}},
\quad \bar{x}=\mathrm{diam}\, \Omega,
\\&&
 \tau=\frac{\bar{x}^2}{\bar{\mu}V_{\mathrm{th}}},\quad 
\bar{\mu}=\sup_{\boldsymbol{x}\in\Omega}\max\{\mu_n(\boldsymbol{x}),\mu_p(\boldsymbol{x})\}, 
\quad 
\bar{\varepsilon}=\sup_{\boldsymbol{x}\in\Omega} \varepsilon(\boldsymbol{x}), 
\eeaa
We  notice explicitely that from \eqref{eq:dens-pot}, the scaling implies
\bea \hat{n}(\hat{\varphi}_n,\hat{\varphi}_p)=\exp({\hat{\psi}-\hat{\varphi}_n})
, \quad
\hat{p}(\hat{\varphi}_n,\hat{\varphi}_p)=\frac{n_\mathrm{i}^2}{\bar{C}^2} \exp({\hat{\varphi}_p-\hat{\psi}}).
\eea 
We scale the generation term $G$ and the recombination term $R$ as follows  
\beaa 
G(\boldsymbol{x})=\bar{G}\hat{G}(\hat{\boldsymbol{x}}), \quad
R(n,p)=\bar{R}\hat{R}(\hat{n},\hat{p}),
\eeaa
with   
\beaa
&&
\bar{R}=\frac{\bar{C}}{\tau}, 
\quad 
\bar{G}= \frac{n_\mathrm{i}^2}{\bar{C}\tau},
\eeaa
and
\begin{equation}
    \begin{split}
\hat{R}(\hat{n},\hat{p}) &=\hat{r}(\hat{n},\hat{p})(\hat{n}\hat{p}-n_\mathrm{i}^2/\bar{C}^2) 
\\
  \hat{r}(\hat{n},\hat{p}) &=\hat{C}_d+\hat{C}_{\n}\hat{n}+\hat{C}_{\p}\hat{p} +\frac{1}{\hat{\tau}_p(\hat{n}+\frac{n_{\text{i}}}{\bar{C}}\hat{n}_T)+\hat{\tau}_n(\hat{p}+\frac{n_{\text{i}}}{\bar{C}}\hat{p}_T)}        
    \end{split}
    \label{eq:scaled-R}
\end{equation}
where  
\beaa
&&
\hat{C}_d=\tau\bar{C}C_d, 
\quad
\hat{C}_{\n}=\tau\bar{C}^2C_{\n},
\quad
\hat{C}_{\p}=\tau\bar{C}^2C_{\p},
\\&&
\hat{\tau}_n=\frac{\tau_n}{\tau},
\quad
\hat{\tau}_p=\frac{\tau_p}{\tau},
\quad
\hat{n}_{T}=\frac{n_T}{n_{\text{i}}},
\quad
\hat{p}_{T}=\frac{p_T}{n_{\text{i}}}.
\eeaa

The system \eqref{eq:vR-model} becomes
\begin{equation}
  \begin{cases}
-\frac{\bar{\varepsilon}V_{\mathrm{th}}}{q\bar{C}\bar{x}^2} \hat{\nabla} \cdot (\hat{\varepsilon} \hat{\nabla} \hat{\psi}) &= 
\hat{p}-\hat{n}+\hat{C}(\hat{\boldsymbol{x}})
\\[2ex]
- \hat{\nabla}\cdot ( \hat{\mu}_\n  \hat{n} \hat{\nabla} \hat{\varphi}_\n ) &= \hat{R}(\hat{n},\hat{p}) -\frac{n_\mathrm{i}^2}{\bar{C}^2}\hat{G}(\hat{\boldsymbol{x}}), \qquad
\\[2ex]
-\hat{\nabla}\cdot (\hat{\mu}_\p  \hat{p} \hat{\nabla} \hat{\varphi}_{\p}) &= \frac{n_\mathrm{i}^2}{\bar{C}^2}\hat{G}(\hat{\boldsymbol{x}})-\hat{R}(\hat{n},\hat{p}), \qquad
\end{cases}
\label{eq:vR-scaled-model}
\end{equation}
The current densities scale according to: 
\begin{equation}
  \boldsymbol{J}_\n = q \frac{\bar{\mu }\bar{C}V_{\mathrm{th}}}{\bar{x}}\hat{\boldsymbol{J}_\n}, \qquad   \boldsymbol{J}_\p = q \frac{\bar{\mu }{\bar{C}}V_{\mathrm{th}}}{\bar{x}}\hat{\boldsymbol{J}_\p}.
  \label{eq:currents-hat}
\end{equation}
The boundary conditions become:
\begin{equation}
  \frac{\partial \hat{\psi}}{\partial \boldsymbol{\nu}}=
  \frac{\partial \hat{\varphi}_\n}{\partial \boldsymbol{\nu}}=
  \frac{\partial \hat{\varphi}_\p}{\partial \boldsymbol{\nu}}=0,  \quad \text{on }\Gamma_N,
  \label{eq:neumann-data-hat}
  \end{equation}
  \begin{equation}
    \begin{split}&
      \hat{\psi} - \hat{\psi}_0= \hat{\varphi}_\n -\hat{\varphi}_0= \hat{\varphi}_\p -\hat{\varphi}_0= 0\quad \text{on }\quad\Gamma_{D_1}, 
    \\&
    \hat{\psi} - \hat{\psi}_0= \hat{\varphi}_\n -\hat{\varphi}_0= \hat{\varphi}_\p -\hat{\varphi}_0= \hat{u}_{D}  \quad \text{on }\quad\Gamma_{D_2}, 
    \label{eq:Dirichlet-data-hat}
    \end{split}
    \end{equation}
    where $\psi_0=V_\mathrm{th}\hat{\psi}_0$, $\varphi_0=V_\mathrm{th}\hat{\varphi}_0$, $u_{D}  = \bar{ u}_{D} \hat{ u}_{D} = V_{\mathrm{th}}\hat{ u}_{D}$. 
    Using \eqref{eq:current-i-th-ohmic}, \eqref{eq:definition_iD}  and \eqref{eq:currents-hat} we have 
    \begin{equation}
      \begin{split}
      i_D(u_{D})  &= \displaystyle\int_{{\Gamma}_{D_2}}  {\boldsymbol{\nu}} \cdot ({\boldsymbol{J}_\n} (\boldsymbol{x})+{\boldsymbol{J}_\p} (\boldsymbol{x}))  d{\sigma}(\boldsymbol{x})
\\
 &=   \bar{i}_D \int_{\hat{\Gamma}_{D_2}}  {\boldsymbol{\nu}} \cdot ( \hat{\boldsymbol{J}_\n}+  \hat{\boldsymbol{J}_\p} ) d\hat{\sigma}(\boldsymbol{x})
 =\bar{i}_D\hat{i}_D(\hat{u}_{D}) , 
    \label{eq:current-i-th-ohmic-hat}
  \end{split}
\end{equation}
with
 $\bar{i}_D= q  \bar{\mu }\bar{C} V_{\mathrm{th}} {\bar{x}}$, and
 \[
 \hat{i}_D(\hat{u}_{D})=\int_{\hat{\Gamma}_{D_2}}  {\boldsymbol{\nu}} \cdot ( \hat{\boldsymbol{J}_\n}+\hat{\boldsymbol{J}_\p} ) d\hat{\sigma}(\boldsymbol{x}).
 \]
The coupling equation \eqref{eq:network-potentials-2} becomes 
  \begin{equation}
    \begin{split}
      \hat{ u}_{D}   =  
      \hat{\mathcal{R}}\, \hat{i}_D(\hat{u}_{D}), 
      \quad \text{with } \hat{\mathcal{R}}=q \bar{\mu }\bar{C} {\bar{x}}  {\mathcal{R}}. 
      \label{eq:network-potentials-2-hat}
    \end{split}
    \end{equation}

Omitting the $\hat{ }$ and introducing the nondimensional parameters
\bea
\label{eq:scaling_parameters}
&&
\lambda=\sqrt{\frac{\bar{\varepsilon}V_{\mathrm{th}}}{q\bar{C}\bar{x}^2}},
\quad
\delta=\frac{n_{\text{i}}}{\bar{C}},
\eea
we obtain 
\begin{equation}\begin{cases}
-\lambda^2 \nabla \cdot (\varepsilon \nabla \psi) &= 
\delta^2 e^{\varphi_p-\psi} - e^{\psi-\varphi_n}  +C(\boldsymbol{x}),
\\[2ex]
- \nabla\cdot ( \mu_\n  e^{\psi-\varphi_n} \nabla \varphi_\n ) &= \delta^2r_{\delta}(n,p) ( e^{\varphi_p-\varphi_n}-1) -\delta^2 G(\boldsymbol{x}), \qquad
\\[2ex]
- \nabla\cdot (\mu_\p   e^{\varphi_p-\psi} \nabla \varphi_{\p}) &=   G(\boldsymbol{x})-r_{\delta}(n,p) ( e^{\varphi_p-\varphi_n}-1),  
\end{cases}
  \label{eq:vR-scaled-model_fin}
\end{equation}
in which we have used $n=e^{\psi-\varphi_n}$, $p=\delta^2 e^{\varphi_p-\psi}$ and set
\be
r_{\delta}(n,p)
={C}_d+{C}_{\n}n+{C}_{\p} p +\frac{1}{{\tau}_p(n+\delta n_T)+{\tau}_n(p+\delta {p}_T)}.
\label{r_delta}
\ee
The boundary and coupling conditions are 
\begin{equation}
  \frac{\partial {\psi}}{\partial \boldsymbol{\nu}}=
  \frac{\partial {\varphi}_\n}{\partial \boldsymbol{\nu}}=
  \frac{\partial {\varphi}_\p}{\partial \boldsymbol{\nu}}=0,  \quad \text{on }\Gamma_N,
  \label{eq:neumann-data-hat-limit-app}
  \end{equation}
  \begin{equation}
    \begin{split}&
      {\psi} - {\psi}_0= {\varphi}_\n - {\varphi}_0= {\varphi}_\p-{\varphi}_0 = 0\quad \text{on }\quad\Gamma_{D_1}, 
    \\&
    {\psi} - {\psi}_0= {\varphi}_\n - {\varphi}_0= {\varphi}_\p - {\varphi}_0= {u}_{D}  \quad \text{on }\quad\Gamma_{D_2}, 
    \label{eq:Dirichlet-data-hat-limit-app}
    \end{split}
    \end{equation}
    \be
{u}_D   =  
      {\mathcal{R}} i_D(u_D),
\qquad 
  {i}_D({u}_{D})=\displaystyle\int_{{\Gamma}_{D_2}}  {\boldsymbol{\nu}} \cdot ( {\boldsymbol{J}_\n}+{\boldsymbol{J}_\p} ) d{\sigma}(\boldsymbol{x}).
  \label{eq:coupling}
  \ee

We compute the dimensionless parameters in \eqref{eq:scaling_parameters} for two standard seminconductors, namely silicon and gallium arsenide. The corresponding parameters can be found in \ref{sec:parameters}.
The laser power $P$ is chosen for both cases to be 20mW and the conduction band-edge energy $E_c = k_B T\log(N_c/\bar{C})$.
For a silicon crystal, as described in \cite{Farrell2021,Luedge1997, Kayser2020b}, the parameters in \eqref{eq:scaling_parameters} become
\bea
\label{eq:scaling_parameters2}
&&
\lambda\approx 1.249382\times 10^{-5},
\quad
\delta\approx 5.528936\times 10^{-7}.
\eea
%
%
For a gallium arsenide crystal, as described in \cite{Kayser2020d}, the parameters in \eqref{eq:scaling_parameters} become
\bea
\label{eq:scaling_parameters3}
&&
\lambda\approx  1.306319 \times 10^{-6},
\quad
\delta\approx 2.154036 \times 10^{-12}.
\eea

These calculations show that for two standard semiconductor crystals the parameter $\delta$ is indeed a small parameter, and it is appropriate to perfom an asymptotic analysis for $\delta$ tending to zero.
In this limit, the system \eqref{eq:vR-scaled-model_fin} becomes 
\begin{equation}\begin{cases}
  &-\lambda^2 \nabla \cdot (\varepsilon \nabla \psi) = 
    - e^{\psi-\varphi_n}  +C(\boldsymbol{x}),
\\[2ex]
 & - \nabla\cdot ( \mu_\n  e^{\psi-\varphi_n} \nabla \varphi_\n ) = 0
\\[2ex]
 & - \nabla\cdot (\mu_\p   e^{\varphi_p-\psi} \nabla \varphi_{\p}) =   G(\boldsymbol{x})-{r}_0(e^{\psi-\varphi_n}) ( e^{\varphi_p-\varphi_n}-1), 
  \\[1ex]&\displaystyle  \frac{\partial {\psi}}{\partial \boldsymbol{\nu}}=
  \frac{\partial {\varphi}_\n}{\partial \boldsymbol{\nu}}=
  \frac{\partial {\varphi}_\p}{\partial \boldsymbol{\nu}}=0,  \quad \text{on }\Gamma_N,
  \\&{\psi} - {\psi}_0= {\varphi}_\n - {\varphi}_0= {\varphi}_\p - {\varphi}_0= 0\quad \text{on }\quad\Gamma_{D_1}, 
  \\&
  {\psi} - {\psi}_0= {\varphi}_\n - {\varphi}_0= {\varphi}_\p - {\varphi}_0= {u}_{D}  \quad \text{on }\quad\Gamma_{D_2}, 
  \\&
  {u}_D   =  
  {\mathcal{R}} i_D(u_D),
\qquad 
{i}_D({u}_{D})=\displaystyle\int_{{\Gamma}_{D_2}}  {\boldsymbol{\nu}} \cdot {\boldsymbol{J}_\n}  d{\sigma}(\boldsymbol{x}).
  \end{cases}
    \label{eq:vR-scaled-model_fin-delta=0}
  \end{equation}
Here we have used the notation $r_0$ to indicate that we put  $\delta = 0$ in \eqref{r_delta} so that 
  \be 
  r_{0}(n)=C_d+C_{\n}n +\frac{1}{\tau_p n}.
  \label{eq:r-0}
  \ee
  We notice that the hole current for $\delta=0$ vanishes since
  \be
  \boldsymbol{J}_p = -\delta^2 \mu_p e^{\varphi_p-\psi}\nabla \varphi_p. 
  \ee
 and from the second equation in \eqref{eq:vR-scaled-model_fin-delta=0} the electron current density $\boldsymbol{J}_n$ is constant, this implies that ${i}_D({u}_{D})=0$ which means there is no coupling between the circuit and the drift diffusion system at the lowest order. For this reason, we perform an aymptotic expansion in the next section to find leading-order models which capture the most important characteristics from \eqref{eq:vR-scaled-model_fin}.  


\section{Asymptotic expansion}
\label{sec:expansion} 
We consider the following asymptotic expansions:
\be
\psi=\sum_{k\ge0} \psi^{(k)}\delta^k, \quad 
\varphi_\n=\sum_{k\ge0} \varphi_\n^{(k)}\delta^k, \quad 
\varphi_\p=\sum_{k\ge0} \varphi_\p^{(k)}\delta^k, \quad 
u_D=\sum_{k\ge0} u_D^{(k)}\delta^k,
\label{exp1}
\ee
which in turns yield expansions for $n=\exp(\psi-\varphi_n)$, $p=\delta^2 \exp({\varphi_p-\psi})$,
$R=\delta^2 r_{\delta} \left(\frac{np}{\delta^2}-1\right) $.
\begin{lemma}
Assuming expansions \eqref{exp1}, we have
\be
\n(\psi, \varphi_n)=\sum_{k\ge0} \n^{(k)}\delta^k, \qquad 
\p(\psi, \varphi_p)=\delta^2\sum_{k\ge0} \p^{(k)}\delta^k, \qquad 
R=\delta^2 \sum_{k\ge0} R^{(k)}\delta^k,
\ee
with
\be
\begin{split}
n^{(0)}&= \exp(\psi^{(0)}-\varphi_n^{(0)}), \quad
n^{(1)}=n^{(0)}(\psi^{(1)}-\varphi_n^{(1)}), 
\\
n^{(k)}&=n^{(0)}(\psi^{(k)}-\varphi_n^{(k)}) +n^{(0)}
F^{(k)}
\left(\frac{n^{(1)}}{n^{(0)}},\dots,\frac{n^{(k-1)}}{n^{(0)}}\right), \quad k\ge 2,
\\
p^{(0)}&= \exp(\varphi_p^{(0)}-\psi^{(0)}), \quad
p^{(1)}=p^{(0)}(\varphi_\p^{(1)}-\psi^{(1)}), 
\\
p^{(k)}&=p^{(0)}(\varphi_\p^{(k)}-\psi^{(k)}) +p^{(0)}
F^{(k)}
\left(\frac{p^{(1)}}{p^{(0)}},\dots,\frac{p^{(k-1)}}{p^{(0)}}\right), \quad k\ge 2,
\end{split}
\ee
and 
\be
\begin{split}
R^{(0)}&=r_0(n^{(0)})(n^{(0)}p^{(0)}-1), \\
R^{(1)}&=s_n(n^{(0)},p^{(0)})n^{(1)}+s_p(n^{(0)})p^{(1)}-\frac{\tau_n p_T+\tau_p n_T}{\tau_p^2 n^{(0)2}}(n^{(0)}p^{(0)}-1),\\
R^{(k)}&=s_n(n^{(0)},p^{(0)})n^{(k)}+s_p(n^{(0)})p^{(k)} \\
&+F_R^{(k)}(n^{(0)},\dots,n^{(k-1)},p^{(0)},\dots,p^{(k-1)}), \quad k\ge 2,
\end{split}
\ee 
in which 
\[
s_n(n,p)=
C_d p+2 C_{n} np-C_{n}+\frac{1}{\tau_{p}n^2},
\quad
s_p(n)=
C_d n + C_n n^2 + \frac{1}{\tau_{p}},
\]
and $F^{(k)}$, and $F_R^{(k)}$ are some multivariate polynomial functions.  
\end{lemma}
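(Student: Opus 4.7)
The plan is to introduce auxiliary variables $A := \psi - \varphi_\n$, $B := \varphi_\p - \psi$, and $\tilde p := p/\delta^2$, so that $n = e^{A}$ and $\tilde p = e^{B}$. By hypothesis, $A = \sum_{k\ge 0} A^{(k)}\delta^k$ with $A^{(k)} = \psi^{(k)} - \varphi_\n^{(k)}$, and analogously for $B$. These two identities share the same exponential structure, and this is what will allow a single polynomial $F^{(k)}$ to govern both the $n$- and the $p$-expansions.

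For $n$, I would expand $n = e^{A^{(0)}} \sum_{m\ge 0} \frac{1}{m!}(A-A^{(0)})^m$ and collect powers of $\delta$. Since $A-A^{(0)} = \sum_{k\ge 1} A^{(k)}\delta^k$, the coefficient $n^{(k)}/n^{(0)}$ is a polynomial in $A^{(1)},\dots,A^{(k)}$, linear in $A^{(k)}$ with coefficient one. To replace each $A^{(j)}$ with $j<k$ by the ratios $n^{(l)}/n^{(0)}$, I would invert $A = \log n$ order-by-order via the log series applied to $1 + \sum_{l\ge 1}(n^{(l)}/n^{(0)})\delta^l$; this expresses every $A^{(j)}$ with $j<k$ as a polynomial in the ratios $n^{(l)}/n^{(0)}$ for $l\le j$. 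Substituting back yields the claimed form with an explicit universal polynomial $F^{(k)}$. Because the argument uses only the exponential structure and not the specific exponent, the very same $F^{(k)}$ applies to $\tilde p = e^{B}$, giving the $p$-expansion with $p^{(k)} := \tilde p^{(k)}$.

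For the recombination term, the strategy is to use $np - \delta^2 = \delta^2(n\tilde p - 1)$ to write $R = \delta^2 r_\delta(n,p)(n\tilde p - 1)$, so that $R^{(k)}$ is the $\delta^k$-coefficient of this product. I would expand $n\tilde p - 1$ by Cauchy product from the series for $n$ and $\tilde p$, and expand $r_\delta(n,p)$ by handling $C_{\n}n$ directly, noting that $C_{\p}p$ starts at $\delta^2$, and applying a geometric series to $1/D$ with $D = \tau_p(n+\delta n_T) + \tau_n(p+\delta p_T)$, whose leading term is $\tau_p n^{(0)}$ and whose $\delta$-correction is $\tau_p n^{(1)} + \tau_p n_T + \tau_n p_T$. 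At $k=0$ the product yields $R^{(0)} = r_0(n^{(0)})(n^{(0)}p^{(0)}-1)$ at once. At $k=1$, collecting contributions gives the coefficient of $p^{(1)}$ as $r_0(n^{(0)})n^{(0)} = s_p(n^{(0)})$, the coefficient of $n^{(1)}$ as $r_0(n^{(0)})p^{(0)} + (C_{\n} - 1/(\tau_p n^{(0)2}))(n^{(0)}p^{(0)}-1) = s_n(n^{(0)},p^{(0)})$ after simplification using $r_0(n^{(0)}) = C_d + C_{\n}n^{(0)} + 1/(\tau_p n^{(0)})$, and the remaining piece as $-((\tau_n p_T + \tau_p n_T)/(\tau_p^2 n^{(0)2}))(n^{(0)}p^{(0)}-1)$.

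For $k\ge 2$, the critical observation is that $p^{(k)}$ does not appear in the $\delta^k$-coefficient of $r_\delta$ at all---since $C_{\p}p$ and $\tau_n p$ shift indices by two---while $n^{(k)}$ enters there only through $C_{\n} n^{(k)}$ and $-n^{(k)}/(\tau_p n^{(0)2})$ from the geometric series. The same algebraic identity as at $k=1$ then produces the linear part $s_n(n^{(0)},p^{(0)})n^{(k)} + s_p(n^{(0)})p^{(k)}$, while all remaining contributions are polynomial in $n^{(0)},\dots,n^{(k-1)}$ and $p^{(0)},\dots,p^{(k-1)}$ and are absorbed into $F_R^{(k)}$. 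The hardest part will be precisely this general-$k$ bookkeeping, since three coupled power series must be multiplied out while isolating the highest-order unknowns; the essential simplification that keeps it tractable is the two-order shift between $p$ and $r_\delta$, which prevents any lower-order correction from reaching the coefficient of $p^{(k)}$.
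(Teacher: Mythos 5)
Your proposal is correct and follows essentially the same route as the paper: the paper organizes the exponential and reciprocal expansions via Fa\`a di Bruno's formula (which for $e^z$ and $z^{-1}$ reduces to exactly your multinomial and geometric-series expansions), uses an induction equivalent to your order-by-order inversion of $\log n$ to express lower-order terms through the ratios $n^{(j)}/n^{(0)}$, and extracts $s_n,s_p$ from the Cauchy product via the same identities $s_p(n)=r_0(n)n$ and $s_n(n,p)=r_0(n)p+\bigl(C_n-\tfrac{1}{\tau_p n^2}\bigr)(np-1)$ together with the two-order shift of $p$ that you correctly identify as the key bookkeeping simplification.
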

{\bf Proof.}
Let us consider an analytic function $f(z)$, and a function defined by a power series $g(\delta)=\sum_{k\ge0}a_k\delta^k$. We can expand $f(g(\delta))$ in power series of $\delta$,
\[
f(g(\delta))=f(g(0))+\sum_{k\ge1}\left[\frac{\mathrm{d}^{k}}{\mathrm{d}\delta^{k}}f(g(\delta))\right]_{\delta=0}\frac{\delta^{k}}{k!}.
\]
Using Fa\`a di Bruno's formula, we can write the derivatives as
\[
\frac{\mathrm{d}^{k}}{\mathrm{d}\delta^{k}}f(g(\delta))
=\sum_{j_1+2j_2+\cdots k j_k=k} \frac{k!f^{(j)}(g(\delta))}{j_1!j_2!\cdots j_k!}\left(\frac{g'(\delta)}{1!}\right)^{j_1}\left(\frac{g''(\delta)}{2!}\right)^{j_2}\cdots\left(\frac{g^{(k)}(\delta)}{k!}\right)^{j_k},
\]
where  the sum runs over all nonnegative integers $j_1, j_2, \dots, j_k$ which satisfies the Diophantine equation written below the summation, and $j=j_1+j_2+\cdots+j_k$.
Recalling the definition of $g(\delta)$, it is simple to obtain
\[
g^{(i)}(\delta)=\sum_{k\ge i} k(k-1)\cdots(k-i+1) a_{k}\delta^{k-i},
\quad
g^{(i)}(0)=i!\,a_{i},
\]
and thus we have
\be
f(g(\delta))=f(a_0)+\sum_{k\ge1}\sum_{j_1+2j_2+\cdots k j_k=k} \frac{f^{(j)}(a_0)}{j_1!j_2!\cdots j_k!}a_{1}^{j_1}a_{2}^{j_2}\cdots a_{k}^{j_k}\delta^{k}.
\label{faaexp}
\ee
Applying \eqref{faaexp} to $n$, with $f(z)= e^z$, $a_k=\psi^{(k)}-\varphi_n^{(k)}$, and observing that $f^{(k)}(z)=f(z)$ for all $k\ge0$, we get
\be
n = \exp\left(\psi^{(0)}-\varphi_n^{(0)}\right) \left(1 +\sum_{k\ge1}\;\sum_{j_1+\cdots+k j_k=k}\frac{\left(\psi^{(1)}-\varphi_\n^{(1)}\right)^{j_1}\cdots\left(\psi^{(k)}-\varphi_\n^{(k)}\right)^{j_k}}{j_1!\cdots j_k !} \delta^k\right).
\ee
Thus we have
\beaa
n^{(0)} &=& \exp\left(\psi^{(0)}-\varphi_n^{(0)}\right),
\\
n^{(k)} &=& \sum_{j_1+\cdots+k j_k=k} \frac{n^{(0)}}{j_1!\cdots j_k !}\left(\psi^{(1)}-\varphi_\n^{(1)}\right)^{j_1}\cdots\left(\psi^{(k)}-\varphi_\n^{(k)}\right)^{j_k}
\eeaa
In particular, for $k=1$ we have
\[
n^{(1)}=n^{(0)}\left(\psi^{(1)}-\varphi_n^{(1)}\right),
\]
which yields $f^{(1)}=0$, and
\[
\psi^{(1)}-\varphi_n^{(1)}=\frac{n^{(1)}}{n^{(0)}}.
\]
For $k=2$, the condition $j_1+2j_2=2$ is satisfied only by $(j_1,j_2)=(0,1)$, $(2,0)$, and we have
\[
n^{(2)}=n^{(0)}\left(\psi^{(2)}-\varphi_n^{(2)}\right) +\frac{n^{(0)}}{2}\left(\psi^{(1)}-\varphi_\n^{(1)}\right)^{2} =n^{(0)}\left(\psi^{(2)}-\varphi_n^{(2)})\right) +\frac{n^{(0)}}{2}\left(\frac{n^{(1)}}{n^{(0)}}\right)^{2},
\]
that is
\[
n^{(2)}=n^{(0)}\left(\psi^{(2)}-\varphi_n^{(2)}\right) +n^{(0)}F^{(2)}\left(\frac{n^{(1)}}{n^{(0)}}\right),
\]
with $F^{(2)}(r)=\frac12 r^2$. Proceeding by iteration, we assume that
\[
n^{(j)}=n^{(0)}\left(\psi^{(j)}-\varphi_n^{(j)}\right) +n^{(0)}F^{(j)}
\left(\frac{n^{(1)}}{n^{(0)}},\dots,\frac{n^{(j-1)}}{n^{(0)}}\right), \quad 1\le j <k.
\]
We observe that the only solution to $j_1+2 j_2+\cdots+k j_k=k$ with $j_k>0$ is $(j_1,j_2,\dots,j_k)=(0,0,\dots,1)$. Then we can write
\beaa
n^{(k)} &=& n^{(0)}\left(\psi^{(k)}-\varphi_\n^{(k)}\right) \\
&&
+\sum_{j_1+\cdots+(k-1) j_{k-1}=k} \frac{n^{(0)}}{j_1!\cdots j_{k-1} !}\left(\psi^{(1)}-\varphi_\n^{(1)}\right)^{j_1}\cdots\left(\psi^{(k-1)}-\varphi_\n^{(k-1)}\right)^{j_{k-1}},
\eeaa
where the summation in intended over all indices, bearing in mind that $j_k=0$. Hence, $j_k!=1$ and $a_k^{j_k}=1$. 
From the induction hypothesis we have
\[
\psi^{(j)}-\varphi_n^{(j)}=\frac{n^{(j)}}{n^{(0)}} -F^{(j)}\left(\frac{n^{(1)}}{n^{(0)}},\dots,\frac{n^{(j-1)}}{n^{(0)}}\right), \quad 1\le j <k,
\]
which yields
\[
n^{(k)} = n^{(0)}\left(\psi^{(k)}-\varphi_\n^{(k)}\right) +{n^{(0)}}F^{(k)}\left(\frac{n^{(1)}}{n^{(0)}},\dots,\frac{n^{(k-1)}}{n^{(0)}}\right),
\]
with
\[\begin{split}    
F^{(k)}(z_1,\dots,z_{k-1})&=
\\
\sum_{j_1+\cdots+(k-1) j_{k-1}=k}& \frac{1}{j_1!\cdots j_{k-1} !}z_1^{j_1}\cdots\left(z_{k-1}-f_\n^{(k-1)}(z_1,\dots,z_{k-2})\right)^{j_{k-1}}.
\end{split}\]
Similarly, we can prove the expansion of $p$.
For the expansion of $R$, first we consider the expansion of $r_{\delta}(n,p)$. We apply \eqref{faaexp} to the function $f(z)=z^{-1}$, with $a_0=\tau_p n^{(0)}$, $a_1=\tau_p n^{(1)}+\tau_p n_T+\tau_n p_T$, $a_k=\tau_p n^{(k)}+\tau_n p^{(k-2)}$, $k\ge2$, and observe that $f^{(k)}(z)=(-1)^k k! z^{-k-1}$. Then
\[\begin{split}  
\frac{1}{\tau_p(n+\delta n_T)+\tau_n(  p+\delta p_T)}&=
\\ a_0^{-1}&\left(1+\sum_{k\ge1}\,\sum_{j_1+2j_2+\cdots k j_k=k} \frac{(-1)^jj!}{j_1!j_2!\cdots j_k!} \frac{a_{1}^{j_1}a_{2}^{j_2}\cdots a_{k}^{j_k}}{a_0^{j}}\delta^{k}\right),
\end{split}\]
and we find
\[
r_\delta\left(\sum_{k}n^{(k)}\delta^k,\delta^2\sum_{k}p^{(k)}\delta^k\right) = \sum_{k\ge0}r^{(k)}\delta^k,
\]
with
\beaa
r^{(0)}=r_0(n^{(0)}), 
\quad
r^{(k)}=c(n^{(0)}) n^{(k)}+F_{r}^{(k)}, \quad k\ge 0,
\eeaa
where
\beaa
&&
r_{0}(n) := C_d+C_n n+\frac{1}{\tau_p n},
\quad
c(n) := C_n-\frac{1}{\tau_p n^{2}},
\quad
F_{r}^{(1)} := -\frac{\tau_p n_T+\tau_n p_T}{\tau_p^2n^{(0)2}},
\\ &&
F_r^{(k)}:= \left(C_p-\frac{\tau_{n}}{\tau_{p}n^{(0)2}}\right)p^{(k-2)}
\\ &&\qquad\qquad 
+\sum_{{j_1+2j_2+\cdots + (k-1) j_{k-1}=k}} \frac{(-1)^jj!}{j_1!j_2!\cdots j_{k-1}!} \frac{a_{1}^{j_1}a_{2}^{j_2}\cdots a_{k-1}^{j_{k-1}}}{a_0^{j}}, \quad k\ge2.
\eeaa
Using the Cauchy product
\[
\frac{np}{\delta^2}=\sum_{k\ge0} \sum_{j=0}^{k} n^{(j)}p^{(k-j)} \delta^k,
\]
and combining it with the expansion of $r_\delta$, we get
\[
R=\delta^2 \sum_{k\ge0}R^{(k)}\delta^k,
\quad
R^{(k)}:=\sum_{j=0}^{k}r^{(j)}\sum_{i=0}^{k-j}n^{(i)}p^{(k-j-i)} -r^{(k)},
\]
which, thanks to the identities $s_n(n,p)=r_{0}(n)p+c(n)(np-1)$, $s_p(n)=r_{0}(n)n$,  gives
\[
R^{(0)}=r_{0}(n^{(0)})(n^{(0)}p^{(0)}-1),
\quad
R^{(k)}=s_n(n^{(0)},p^{(0)})n^{(k)}+s_p(n^{(0)})p^{(k)}+
F_R^{(k)},
\quad k\ge1,
\]
where
\beaa
F_{R}^{(1)} &:=& F_{r}^{(1)}(n^{(0)}p^{(0)}-1),
\\
F_R^{(k)} &:=& F_{r}^{(k)}(n^{(0)}p^{(0)}-1) +r^{(0)}\sum_{i=1}^{k-1}n^{(i)}p^{(k-i)}+\sum_{j=1}^{k-1}r^{(j)}\sum_{i=0}^{k-j}n^{(i)}p^{(k-j-i)}.
\eeaa
This proves the final part of the lemma.
\qed

Using the above expansions in \eqref{eq:vR-scaled-model_fin} we find:
\be
\begin{cases}
-\lambda^2 \nabla \cdot (\varepsilon \nabla \psi^{(k)}) = 
F_{\psi}^{(k)},
\\[1ex]
- \nabla\cdot ( \mu_\n  n^{(0)} \nabla \varphi_\n^{(k)} +
 \mu_\n  n^{(k)} \nabla \varphi_\n^{(0)}) = F_{\varphi_n}^{(k)}, \qquad
\\[1ex]
- \nabla\cdot (\mu_\p  p^{(0)} \nabla \varphi_{\p}^{(k)} + \mu_\p p^{(k)}  \nabla \varphi_{\p}^{(0)}) 
=  F_{\varphi_p}^{(k)}, \qquad
\end{cases}
\label{eq:system-after-expantion}
\ee
where 
\[
F_{\psi}^{(0)} = -n^{(0)} +C(\boldsymbol{x}), \qquad 
F_{\psi}^{(1)} = -n^{(1)}, \qquad 
F_{\psi}^{(k)} = p^{(k-2)}-n^{(k)}, \qquad k >  1, 
\]
\[
F_{\varphi_n}^{(0)} =  F_{\varphi_n}^{(1)} = 0, \qquad 
F_{\varphi_n}^{(2)} = R^{(0)} - G(\boldsymbol{x}) + \nabla \cdot (
\mu_\n  n^{(1)} \nabla \varphi_\n^{(1)}), \qquad 
\] 
\[
F_{\varphi_n}^{(k)} = R^{(k)} + 
\sum_{j=1}^{k-1}\nabla\cdot (\mu_\n  n^{(j)} \nabla \varphi_{\n}^{(k-j)}), \qquad k > 2, 
\]
\[
F_{\varphi_p}^{(0)} = 2(G(\boldsymbol{x})- R^{(0)}), \qquad 
F_{\varphi_p}^{(k)} = -R^{(k)} + 
\sum_{j=1}^{k-1}\nabla\cdot (\mu_\p  p^{(j)} \nabla \varphi_{\p}^{(k-j)}), \qquad k > 0,
\]
and applying the expansion to the coupling condition \eqref{eq:coupling} we find:
\bea
{u}_D^{(k)}   =
\begin{cases}
     \displaystyle
     -{\mathcal{R}}\displaystyle\int_{{\Gamma}_{D_2}}  \mu_n n^{(0)}\frac{\partial \varphi_n^{(0)}}{\partial \boldsymbol{\nu}} d{\sigma}(\boldsymbol{x}), \quad  k=0,
     \\[2ex]
     \displaystyle-{\mathcal{R}}\displaystyle\int_{{\Gamma}_{D_2}}  \left(\mu_n n^{(0)}\frac{\partial \varphi_n^{(1)}}{\partial \boldsymbol{\nu}} + \mu_n n^{(1)}\frac{\partial \varphi_n^{(0)}}{\partial \boldsymbol{\nu}} \right) d{\sigma}(\boldsymbol{x}), \quad k=1,
       \\[2ex]
      \displaystyle
      -{\mathcal{R}}\displaystyle\int_{{\Gamma}_{D_2}}  \left(\mu_n \sum_{j=0}^{k}n^{(j)}\frac{\partial \varphi_n^{(k-j)}}{\partial \boldsymbol{\nu}} +  \mu_p \sum_{j=0}^{k-2}p^{(j)}\frac{\partial \varphi_p^{(k-2-j)}}{\partial \boldsymbol{\nu}} \right) d{\sigma}(\boldsymbol{x}), \quad k>1.
\end{cases}
\eea

We have
\begin{equation}
  \frac{\partial {\psi}^{(k)}}{\partial \boldsymbol{\nu}}=
  \frac{\partial {\varphi}_\n^{(k)}}{\partial \boldsymbol{\nu}}=
  \frac{\partial {\varphi}_\p^{(k)}}{\partial \boldsymbol{\nu}}=0,  \quad \text{on }\Gamma_N, \quad k\ge0,
  \label{eq:neumann-data-k}
\end{equation}
\begin{equation}
  \begin{split}
  &{\psi}^{(0)} - {\psi}_0= {\varphi}_\n^{(0)} -{\varphi}_0 = {\varphi}_\p^{(0)} -{\varphi}_0 = 0,\quad \text{on }\quad\Gamma_{D_1}, 
  \\& 
  {\psi}^{(0)} - {\psi}_0= {\varphi}_\n^{(0)} - {\varphi}_0= {\varphi}_\p^{(0)} -{\varphi}_0= {u}_{D}^{(0)},  \quad \text{on }\quad\Gamma_{D_2},
  \label{eq:Dirichlet-data-0}
  \end{split}
\end{equation}
\begin{equation}
  \begin{split}
  &{\psi}^{(k)} = {\varphi}_\n^{(k)} = {\varphi}_\p^{(k)} = 0,\quad \text{on }\quad\Gamma_{D_1}, 
  \\& 
  {\psi}^{(k)} = {\varphi}_\n^{(k)} = {\varphi}_\p^{(k)} = {u}_{D}^{(k)},  \quad \text{on }\quad\Gamma_{D_2}, \quad k\ge1.
  \label{eq:Dirichlet-data-k}
  \end{split}
\end{equation}

We focus our attention on the leading order terms for the expansion of the carrier concentrations $n, p$ and of the potential $u_D$. In the next section, we will show that 
\be
 \varphi_n^{(0)} - \varphi_0 = 
 \psi^{(1)} =\varphi_n^{(1)} = 0, \qquad u_D^{(0)} =u_D^{(1)} = 0.
\label{eq:vanishing-unknowns}
\ee
Thus, we concentrate on the corresponding leading order terms for the potentials 
\[
\psi = \psi^{(0)} + \delta^2 \psi^{(2)} + O(\delta^3), \quad  
\varphi_n  = \varphi_0 + \delta^2 \varphi_n^{(2)} + O(\delta^3),
\quad  
\varphi_p  = \varphi_p^{(0)}  + O(\delta),
\]
which lead to 
\[\begin{split}   
n &= e^{\psi^{(0)}-  \varphi_0 }\left( 1+ \delta^2 (\psi^{(2)} - \varphi_n^{(2)}  )\right)+ O(\delta^3),
\quad 
p = \delta^2 e^{ \varphi_p^{(0)}  - \psi^{(0)}}+ O(\delta^3),
\\
u_D &= \delta^2 u_D^{(2)} + O(\delta^3). 
\end{split}
\]
In the following section, we will show that the resulting simplified model depends only on the five unknowns $\psi^{(0)}$, $\varphi_p^{(0)}$,  $\psi^{(2)}$, $\varphi_n^{(2)}$ and $u_D^{(2)}$ which satisfy the following equations
\be
\begin{cases}
-\lambda^2 \nabla \cdot (\varepsilon \nabla \psi^{(0)}) &= 
-e^{\psi^{(0)}-  \varphi_0 }+C(\boldsymbol{x}),
\\[1ex]
- \nabla\cdot (\mu_\p  e^{ \varphi_p^{(0)}  - \psi^{(0)}} \nabla \varphi_{\p}^{(0)}) &=   
-r_{0}(e^{\psi^{(0)}-  \varphi_0 })( e^{ \varphi_p^{(0)}  - \varphi_0 } -1) + G(\boldsymbol{x}), 
\\[1ex]
-\lambda^2 \nabla \cdot (\varepsilon \nabla \psi^{(2)}) & = 
e^{ \varphi_p^{(0)}  - \psi^{(0)}}- e^{\psi^{(0)}-  \varphi_0} ({\psi^{(2)}-  \varphi_n^{(2)} }),
\\[1ex]
- \nabla\cdot (\mu_\n e^{   \psi^{(0)} - \varphi_0}  \nabla \varphi_{\n}^{(2)}) &=   r_{0}(e^{\psi^{(0)}-  \varphi_0 })( e^{ \varphi_p^{(0)}  - \varphi_0 } -1) - G(\boldsymbol{x}), 
\end{cases}
\label{eq:final-system}
\ee
with coupling condition 
\be
u_D^{(2)} =   -{\mathcal{R}}\displaystyle\int_{{\Gamma}_{D_2}}  \left(\mu_n  e^{\psi^{(0)}-  \varphi_0 } 
\frac{\partial \varphi_n^{(2)}}{\partial \boldsymbol{\nu}} +  \mu_p   e^{ \varphi_p^{(0)} - \psi^{(0)}} \frac{\partial \varphi_p^{(0)}}{\partial \boldsymbol{\nu}} \right) d{\sigma}(\boldsymbol{x}),
\label{eq:final-coupling-condition}
\ee
and boundary conditions 
\be
  \frac{\partial \psi^{(0)} }{\partial \boldsymbol{\nu}}=
  \frac{\partial \varphi_p^{(0)} }{\partial \boldsymbol{\nu}}=
  \frac{\partial \psi^{(2)} }{\partial \boldsymbol{\nu}}=
  \frac{\partial \varphi_n^{(2)}}{\partial \boldsymbol{\nu}}=
  0,  \quad \text{on }\Gamma_N,
  \label{eq:final-neumann-data}
  \ee
\be
\begin{split}
&{\psi}^{(0)} - {\psi}_0=  {\varphi}_\p^{(0)} -{\varphi}_0 = \psi^{(2)}  =\varphi_n^{(2)}=  0,\quad \text{on }\quad\Gamma_{D_1}, 
  \\&
  {\psi}^{(0)} - {\psi}_0= {\varphi}_\p^{(0)} - {\varphi}_0=0, \quad \psi^{(2)} =  {\varphi}_\n^{(2)}= {u}_{D}^{(2)},  \quad \text{on }\quad\Gamma_{D_2}.
  \label{eq:final-Dirichlet-data-uD}    
\end{split}
\ee
We observe that \eqref{eq:final-system}$_1$ and the corresponding boundary conditions result into a nonlinear elliptic bounday value problem for $\psi^{(0)}$. Once $\psi^{(0)}$ is known, equation \eqref{eq:final-system}$_2$ and the corresponding boundary conditions become a nonlinear elliptic boundary value problem for $\varphi_p^{(0)}$. Then, equations \eqref{eq:final-system}$_4$ and \eqref{eq:final-coupling-condition} and the corresponding boundary conditions are a coupled problem for $\varphi_n^{(2)}$ and $u_D^{(2)}$.
Finally, equation \eqref{eq:final-system}$_3$ and the corresponding boundary conditions constitute a linear boundary value problem for $\psi^{(2)}$.

In the following section, we prove the validity of \eqref{eq:vanishing-unknowns} and the well-posedness of \eqref{eq:final-system}--\eqref{eq:final-Dirichlet-data-uD}.

\if 
at order 1:
\[
\begin{split}
-\lambda^2 \nabla \cdot (\varepsilon \nabla \psi^{(1)}) &= 
-n^{(1)},
\\
- \nabla\cdot ( \mu_\n  n^{(0)} \nabla \varphi_\n^{(1)}+ \mu_\n  n^{(1)} \nabla \varphi_\n^{(0)}) &= 0, \qquad
\\
- \nabla\cdot (\mu_\p  p^{(0)} \nabla \varphi_{\p}^{(1)}+\mu_\p  p^{(1)} \nabla \varphi_{\p}^{(0)}) &= -s_n(n^{(0)},p^{(0)})n^{(1)}-s_p(n^{(0)})p^{(1)}\\ &+\frac{\tau_n p_T+\tau_p n_T}{\tau_p^2 n^{(0)2}}(n^{(0)}p^{(0)}-1), \qquad
\end{split}
\]
at order 2:
\[
\begin{split}
-\lambda^2 \nabla \cdot (\varepsilon \nabla \psi^{(2)}) &= 
p^{(0)}-n^{(2)},
\\
- \nabla\cdot ( \mu_\n  n^{(0)} \nabla \varphi_\n^{(2)}+\mu_\n  n^{(2)} \nabla \varphi_\n^{(0)} ) &= \nabla\cdot (\mu_\n  n^{(1)} \nabla \varphi_{\n}^{(1)})+R^{(0)} -  G(\boldsymbol{x}), \qquad
\\
- \nabla\cdot (\mu_\p  p^{(0)} \nabla \varphi_{\p}^{(2)}+\mu_\p  p^{(2)} \nabla \varphi_{\p}^{(0)}) &= -s_n(n^{(0)},p^{(0)})n^{(2)}-s_p(n^{(0)})p^{(2)}\\
&+\nabla\cdot (\mu_\p  p^{(1)} \nabla \varphi_{\p}^{(1)})-F_{R}^{(2)}, \qquad
 \end{split}
\]
at order $k>2$:
\[
\begin{split}
-\lambda^2 \nabla \cdot (\varepsilon \nabla \psi^{(k)}) &= 
p^{(k-2)}-n^{(k)},
\\
- \nabla\cdot ( \mu_\n  n^{(0)} \nabla \varphi_\n^{(k)}+\mu_\n  n^{(k)} \nabla \varphi_\n^{(0)} ) &= \sum_{j=1}^{k-1}\nabla\cdot (\mu_\n  n^{(j)} \nabla \varphi_{\n}^{(k-j)})+R^{(k-2)}, \qquad
\\
- \nabla\cdot (\mu_\p  p^{(0)} \nabla \varphi_{\p}^{(k)}+\mu_\p  p^{(k)} \nabla \varphi_{\p}^{(0)}) &= -s_n(n^{(0)},p^{(0)})n^{(k)}-s_p(n^{(0)})p^{(k)}\\
&+\sum_{j=1}^{k-1}\nabla\cdot (\mu_\p  p^{(j)} \nabla \varphi_{\p}^{(k-j)})-F_{R}^{(k)}. \qquad
 \end{split}
\]
\fi

\section{Existence and uniqueness of solutions}
\label{sec:existence} 
Based on the previous discussion we need to keep only the zeroth order term for $\varphi_p$ in order to have second order expansion for the carrier densities. Hence, we focus our attention on   
\beaa
\psi &=& \psi^{(0)} + \delta \psi^{(1)} + \delta^2 \psi^{(2)} + O(\delta^3),  
\\
\varphi_n  & = & \varphi_n^{(0)}  + \delta \varphi_n^{(1)}  + \delta^2 \varphi_n^{(2)} + O(\delta^3),
\\
\varphi_p & =&  \varphi_p^{(0)}   + O(\delta),
\\
u_D  & = & u_D^{(0)}  + \delta u_D^{(1)}  + \delta^2 u_D^{(2)} + O(\delta^3),
\eeaa
and consider the resulting equations for each order separately in order to prove  the validity of \eqref{eq:vanishing-unknowns} and show existence and uniqueness for the system \eqref{eq:final-system}--\eqref{eq:final-Dirichlet-data-uD}.

\subsection{Existence and uniqueness of solutions for the zero order problem} 
At order 0, the nonlinear equations for $\psi^{(0)}$, $\varphi_\n^{(0)}$, $\varphi_\p^{(0)}$ and $u_D^{(0)}$, spelled out in full detail, are:
\be
\begin{cases}
-\lambda^2 \nabla \cdot \left(\varepsilon \nabla \psi^{(0)}\right) &= 
- e^{\psi^{(0)}-\varphi_n^{(0)}}+C(\boldsymbol{x}),
\label{eq:system-order0}
\\
- \nabla\cdot \left( \mu_\n    e^{\psi^{(0)}-\varphi_n^{(0)}} \nabla \varphi_\n^{(0)} \right) &= 0, 
\\
- \nabla\cdot \left(\mu_\p    e^{\varphi_p^{(0)}-\psi^{(0)}} \nabla \varphi_{\p}^{(0)}\right) &=   G(\boldsymbol{x})-r_{0}\left(  e^{\psi^{(0)}-\varphi_n^{(0)}}\right)\left(e^{\varphi_p^{(0)}-\varphi_n^{(0)}}-1\right), 
\end{cases}
\ee
\be
  \frac{\partial {\psi}^{(0)}}{\partial \boldsymbol{\nu}}=
  \frac{\partial {\varphi}_\n^{(0)}}{\partial \boldsymbol{\nu}}=
  \frac{\partial {\varphi}_\p^{(0)}}{\partial \boldsymbol{\nu}}=0,  \quad \text{on }\Gamma_N,
\label{eq:order0-neumann}
\ee
\be
\begin{split}
{\psi}^{(0)} - {\psi}_0= {\varphi}_\n^{(0)}  -{\varphi}_0= {\varphi}_\p^{(0)} -{\varphi}_0 &= 0, \quad \text{on }\quad\Gamma_{D_1}, 
\label{eq:order0-dirichlet1-2}
\\
{\psi}^{(0)} - {\psi}_0= {\varphi}_\n^{(0)} -{\varphi}_0 = {\varphi}_\p^{(0)}  -{\varphi}_0&= {u}_{D}^{(0)}, \quad \text{on }\quad\Gamma_{D_2},
\end{split}
\ee
\be
{u}_D^{(0)}   =  
      -{\mathcal{R}}\displaystyle\int_{{\Gamma}_{D_2}}  \mu_n  e^{\psi^{(0)}-\varphi_n^{(0)}}\frac{\partial \varphi_n^{(0)}}{\partial \boldsymbol{\nu}} d{\sigma}(\boldsymbol{x}).
\label{eq:order0-uD}
\ee

\begin{theorem}
Assuming that $C$, $G\in L^\infty(\Omega)$, the zero-order equations \eqref{eq:system-order0} with boundary conditions \eqref{eq:order0-neumann}--\eqref{eq:order0-dirichlet1-2} and coupling condition \eqref{eq:order0-uD} admit a unique solution in $H^1(\Omega) \cap L^\infty(\Omega)$ for $\psi^{(0)}$, $\varphi_n^{(0)}$, $\varphi_p^{(0)}$, with $u_D^{(0)}\in\mathbb{R}$. Moreover this solution satisfies the identities
\be
u_D^{(0)}=0, \quad \varphi_n^{(0)} ={\varphi}_0,
\label{eq:order0-uD-phin}
\ee
and the estimates
\bea
\min\left\{\inf_{\Gamma_D}\psi_0,\, \varphi_0 +\ln {\underline{C}}\right\} =: \underline{\psi}^{(0)}  &\le \psi^{(0)} \le & \overline{\psi}{}^{(0)} := \max\left\{\sup_{\Gamma_D}\psi_0,\,  \varphi_0 + \ln  {\overline{C}} \right\},\qquad
\label{eq:bounds-psi0}
\\
\varphi_0+ \underline{\varphi}_p &\le \varphi_p^{(0)}  \le & \varphi_0+\overline{\varphi}_p
\label{eq:bounds-phip0}
\eea
where  
\be 
\begin{split}  
\underline{\varphi}_p &:=  \ln (\underline{r}/\overline{r}), \quad \overline{\varphi}_p:=\ln \frac{\overline{r}+\overline{G}}{\underline{r}}, \quad 
\displaystyle\underline{C}:=\inf_\Omega C(x), \quad
\\
\displaystyle\overline{C}&:=\sup_\Omega C(x) \equiv 1, \quad 
\displaystyle\overline{G}:=\sup_\Omega G(x),
\label{eq:phip-G-C}
\end{split}
\ee
in which 
\be
\underline{r} = C_d+C_{\n}e^{\underline{\psi}^{(0)}-\varphi_0}+\frac1{\tau_p}e^{\varphi_0-\overline{\psi}^{(0)}},
\qquad
\overline{r} = C_d+C_{\n}e^{\overline{\psi}^{(0)}-\varphi_0}+\frac1{\tau_p}e^{\varphi_0-\underline{\psi}^{(0)}}.
\label{eq:r-overbar-underbar}
\ee
\end{theorem}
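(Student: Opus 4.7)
The plan is to exploit the cascade structure of system \eqref{eq:system-order0}--\eqref{eq:order0-uD}: first collapse the coupled $(\varphi_\n^{(0)}, u_D^{(0)})$ pair to the explicit values in \eqref{eq:order0-uD-phin}, then solve a decoupled nonlinear Poisson equation for $\psi^{(0)}$, and finally a linear problem for $\varphi_\p^{(0)}$ obtained via a Slotboom change of variables. Each step will yield both existence/uniqueness and the pointwise bounds.

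The first step is the crucial one, since it removes the nonlocal boundary coupling \eqref{eq:order0-uD}. I would test the second equation in \eqref{eq:system-order0} against the admissible variation $\varphi_\n^{(0)} - \varphi_0$, which vanishes on $\Gamma_{D_1}$ and equals the constant $u_D^{(0)}$ on $\Gamma_{D_2}$. Integration by parts together with the Neumann condition on $\Gamma_N$ yields
$$
\int_\Omega \mu_\n \, e^{\psi^{(0)} - \varphi_\n^{(0)}} \,|\nabla \varphi_\n^{(0)}|^2 \, d\boldsymbol{x}
= u_D^{(0)} \int_{\Gamma_{D_2}} \mu_\n \, e^{\psi^{(0)} - \varphi_\n^{(0)}} \, \frac{\partial \varphi_\n^{(0)}}{\partial \boldsymbol{\nu}} \, d\sigma(\boldsymbol{x})
= -\frac{(u_D^{(0)})^2}{\mathcal{R}},
$$
where the final equality is precisely the coupling condition \eqref{eq:order0-uD}. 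Since the left-hand side is non-negative and the right-hand side is non-positive, both must vanish, forcing $\nabla \varphi_\n^{(0)} \equiv 0$ and $u_D^{(0)} = 0$; the Dirichlet datum on $\Gamma_{D_1}$ then gives $\varphi_\n^{(0)} \equiv \varphi_0$, establishing \eqref{eq:order0-uD-phin}.

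With these identities, the first equation of \eqref{eq:system-order0} becomes the standard nonlinear Poisson problem $-\lambda^2 \nabla \cdot (\varepsilon \nabla \psi^{(0)}) + e^{\psi^{(0)} - \varphi_0} = C(\boldsymbol{x})$ with Dirichlet datum $\psi_0$ on $\Gamma_D$ and zero Neumann flux on $\Gamma_N$. Since the nonlinearity is strictly increasing in $\psi^{(0)}$, existence and uniqueness in $H^1 \cap L^\infty$ follow from direct minimization of the strictly convex coercive functional
$$
J(\psi) = \int_\Omega \Bigl( \tfrac{1}{2}\lambda^2 \varepsilon |\nabla \psi|^2 + e^{\psi - \varphi_0} - C \psi \Bigr) \, d\boldsymbol{x}
$$
over the affine class of $H^1$ functions matching $\psi_0$ on $\Gamma_D$. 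The $L^\infty$ bounds \eqref{eq:bounds-psi0} then follow by Stampacchia truncation: testing the weak formulation against $(\psi^{(0)} - \overline{\psi}^{(0)})^+$, which vanishes on $\Gamma_D$ since $\psi_0 \le \overline{\psi}^{(0)}$, and using the inequality $e^{\overline{\psi}^{(0)} - \varphi_0} \ge \overline{C} \ge C(\boldsymbol{x})$ to force the positive part to vanish; the lower bound follows symmetrically.

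For the third equation I would introduce the Slotboom variable $w := e^{\varphi_\p^{(0)} - \varphi_0}$. A direct computation, together with $\varphi_\n^{(0)} \equiv \varphi_0$, converts the equation into the \emph{linear} elliptic problem
$$
-\nabla \cdot \bigl( \mu_\p \, e^{\varphi_0 - \psi^{(0)}} \nabla w \bigr) + r_0 \bigl( e^{\psi^{(0)} - \varphi_0} \bigr) w = G(\boldsymbol{x}) + r_0 \bigl( e^{\psi^{(0)} - \varphi_0} \bigr),
$$
with $w = 1$ on $\Gamma_D$ (since $u_D^{(0)} = 0$) and $\partial w / \partial \boldsymbol{\nu} = 0$ on $\Gamma_N$. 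The bounds of the previous step make the diffusion and reaction coefficients uniformly bounded above and below by positive constants, so Lax--Milgram yields a unique $H^1$ solution. The weak maximum principle applied with the constant sub-/supersolutions $\underline{w} = \underline{r}/\overline{r}$ and $\overline{w} = (\overline{r}+\overline{G})/\underline{r}$ (which satisfy the algebraic inequalities $\overline{r} \, \underline{w} \le \underline{r} \le r + G$ and $\underline{r} \, \overline{w} \ge \overline{r} + \overline{G} \ge r + G$) gives $\underline{w} \le w \le \overline{w}$; taking logarithms returns \eqref{eq:bounds-phip0}. The main conceptual obstacle is the first step, since the coupling condition \eqref{eq:order0-uD} turns the $\varphi_\n^{(0)}$ equation into a nonstandard problem with a nonlocal constraint on $\Gamma_{D_2}$; the energy identity above circumvents this entirely by exploiting that $u_D^{(0)}$ is constant on $\Gamma_{D_2}$ and that $\mathcal{R}>0$.
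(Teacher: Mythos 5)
Your proposal is correct, and its crucial first step coincides exactly with the paper's: testing the $\varphi_\n^{(0)}$ equation against $\varphi_\n^{(0)}-\varphi_0$ and using the coupling condition \eqref{eq:order0-uD} to identify the boundary term as $\pm(u_D^{(0)})^2/\mathcal{R}$, so that positivity of the Dirichlet integral forces $u_D^{(0)}=0$ and $\varphi_\n^{(0)}\equiv\varphi_0$. The treatment of $\psi^{(0)}$ is also the same route; you merely spell out the ``standard methods'' (convex minimization plus Stampacchia truncation with the constants $\underline{\psi}^{(0)},\overline{\psi}{}^{(0)}$) that the paper leaves implicit. Where you genuinely diverge is the $\varphi_\p^{(0)}$ equation: the paper keeps it as a quasilinear problem $-\nabla\cdot(\mu_\p e^{\varphi_\p^{(0)}-\psi^{(0)}}\nabla\varphi_\p^{(0)})=f(\varphi_\p^{(0)},\boldsymbol{x})$ and exhibits the constant lower and upper solutions $\varphi_0+\underline{\varphi}_p$ and $\varphi_0+\overline{\varphi}_p$ directly, whereas you pass to the Slotboom variable $w=e^{\varphi_\p^{(0)}-\varphi_0}$, which turns both the diffusion term and the reaction term into \emph{linear} expressions in $w$ with coefficients controlled by the already-established bounds on $\psi^{(0)}$. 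Your route buys a cleaner uniqueness argument --- Lax--Milgram for the linear $w$-problem, with coercivity supplied by $r_0\ge\underline{r}>0$, immediately gives uniqueness, which the paper's sub/supersolution construction establishes less explicitly --- at the cost of having to check that the maximum principle keeps $w$ strictly positive (which your lower barrier $\underline{w}=\underline{r}/\overline{r}>0$ does) so that the change of variables can be inverted; the barrier inequalities you state reproduce exactly the paper's bounds \eqref{eq:bounds-phip0} after taking logarithms, using $G\ge0$ for the lower one.
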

{\bf Proof.}
First, we prove that any solution of \eqref{eq:system-order0}--\eqref{eq:order0-uD} satisfies \eqref{eq:order0-uD-phin}. We multiply \eqref{eq:system-order0}$_2$ by $\varphi_n^{(0)}-\varphi_0$, integrate over $\Omega$ and use integration by parts to obtain
\beaa
0 &=& -\int_{\Omega}(\varphi_n^{(0)}-\varphi_0)  \nabla\cdot \left( \mu_\n   e^{\psi^{(0)}-\varphi_n^{(0)}} \nabla \varphi_\n^{(0)} \right)\,\mathrm{d}\boldsymbol{x}
\\ 
&=& -\int_{\partial\Omega}(\varphi_n^{(0)}-\varphi_0)\mu_\n   e^{\psi^{(0)}-\varphi_n^{(0)}} \frac{\partial \varphi_\n^{(0)}}{\partial \boldsymbol{\nu}} \,\mathrm{d}\sigma(\boldsymbol{x}) +\int_{\Omega} \mu_\n   e^{\psi^{(0)}-\varphi_n^{(0)}} |\nabla \varphi_\n^{(0)}|^2 \,\mathrm{d}\boldsymbol{x}
\\ 
&=& \frac{1}{\mathcal{R}}|u_D^{(0)}|^2 +\int_{\Omega} \mu_\n   e^{\psi^{(0)}-\varphi_n^{(0)}} |\nabla \varphi_\n^{(0)}|^2 \,\mathrm{d}\boldsymbol{x},
\eeaa
which implies \eqref{eq:order0-uD-phin}.
Then $\psi^{(0)}$ must solve
\be
\begin{cases}
-\lambda^2 \nabla \cdot \left(\varepsilon \nabla \psi^{(0)}\right) &= 
- e^{\psi^{(0)}-\varphi_0}+C(\boldsymbol{x}),
\label{eq:order0-poisson-bis}
\\
  \frac{\partial {\psi}^{(0)}}{\partial \boldsymbol{\nu}}&=0,  \quad \text{on }\Gamma_N,
\\ 
{\psi}^{(0)} - {\psi}_0 &= 0, \quad \text{on }\quad\Gamma_{D_1}\cup\Gamma_{D_2},
\end{cases}
\ee
and, with standard methods it is simple to prove that a solution exists uniquely and satisfies the bounds \eqref{eq:bounds-psi0}.

Finally, $\varphi_p^{(0)}$ must solve
\be
\begin{cases}
    - \nabla\cdot \left(\mu_\p   e^{\varphi_p^{(0)}-\psi^{(0)}} \nabla \varphi_{\p}^{(0)}\right) &=  G(\boldsymbol{x})-r_{0}\left( e^{\psi^{(0)}-\varphi_0}\right)\left(e^{\varphi_p^{(0)}-\varphi_0}-1\right), 
\label{eq:order0-holes-bis}
\\
  \frac{\partial {\varphi}_\p^{(0)}}{\partial \boldsymbol{\nu}}&=0,  \quad \text{on }\Gamma_N,
\\ 
{\varphi}_\p^{(0)} -\varphi_0&= 0, \quad \text{on }\quad\Gamma_{D_1}\cup\Gamma_{D_2}.
\end{cases}
\ee
Equation \eqref{eq:order0-holes-bis} can be written in the form
\be
- \nabla\cdot \left(\mu_\p  e^{\varphi_p^{(0)}-\psi^{(0)}} \nabla \varphi_{\p}^{(0)}\right)
=f(\varphi_p^{(0)},\boldsymbol{x}),
\label{eq:order0-dirichlet2-bisss}
\ee
with
\[
f(\varphi,\boldsymbol{x}) :=   G(\boldsymbol{x})+ r_0\left(  e^{ \psi^{(0)}(\boldsymbol{x}) -\varphi_0}\right) -r_0\left(  e^{ \psi^{(0)}(\boldsymbol{x})-\varphi_0}\right)e^{\varphi-\varphi_0}.
\]
It is simple to verify that
\[
\underline{r}\le r_0\left( e^{ \psi^{(0)}(\boldsymbol{x})-\varphi_0}\right) \le \overline{r},
\]
thus, we have
\be
-\overline{r}\exp{\varphi}+\underline{r}=:\underline{f}(\varphi) \le f(\varphi,x) 
\le \overline{f}(\varphi):=-\underline{r}\exp{\varphi}+\overline{r}+ \overline{G}.
\ee
The function $\underline{f}(\varphi)$ vanishes if and only if $\varphi-\varphi_0=\underline{\varphi}_{p}:=\ln (\underline{r}/\overline{r})$. Similarly, the function $\overline{f}(\overline{\varphi})$ vanishes if and only if $\varphi-\varphi_0=\overline{\varphi}_{p}:=\ln \frac{\overline{r}+\overline{G}}{\underline{r}}$.
It follows that $\varphi_{p}^{(0)}=\varphi_0+\underline{\varphi}_{p}$ is a lower solution and $\varphi_{p}^{(0)}=\varphi_0+\overline{\varphi}_{p}$ is an upper solution for \eqref{eq:order0-dirichlet2-bisss}.
Then, it follows that $\varphi_\p^{(0)}$ satisfies the bounds \eqref{eq:bounds-phip0}. \qed

\subsection{Existence and uniqueness of solutions for the first order problem} 
Next we consider equations at order 1, taking into account the results of Theorem 1. The resulting linear equations for $\psi^{(1)}$, $\varphi_n^{(1)}$, $u_D^{(1)}$ are:
\be
\begin{cases}  
&
-\lambda^2 \nabla \cdot (\varepsilon \nabla \psi^{(1)}) =
-n^{(0)}(\psi^{(1)}-\varphi_n^{(1)}),
\label{eq:system-order1}
\\&
- \nabla\cdot ( \mu_\n  n^{(0)} \nabla \varphi_\n^{(1)}) =
0, 
\\&
  \frac{\partial {\psi}^{(1)}}{\partial \boldsymbol{\nu}}=
  \frac{\partial {\varphi}_\n^{(1)}}{\partial \boldsymbol{\nu}}=0,  \quad \text{on }\Gamma_N,
\\ &
 {\psi}^{(1)} = {\varphi}_\n^{(1)} = 0,  \quad \text{on }\quad\Gamma_{D_1}, 
\qquad 
 {\psi}^{(1)} = {\varphi}_\n^{(1)}   = {u}_{D}^{(1)}, \quad \text{on }\quad\Gamma_{D_2},
\\& 
{u}_D^{(1)}    =
      -{\mathcal{R}}\displaystyle\int_{{\Gamma}_{D_2}}  \mu_n  e^{\psi^{(0)}}\frac{\partial \varphi_n^{(1)}}{\partial \boldsymbol{\nu}} \,d{\sigma}(\boldsymbol{x}),
\end{cases}
\ee 
with $n^{(0)}$, $p^{(0)}$ given by Lemma 1, that is,
\be
n^{(0)}= e^{\psi^{(0)}-\varphi_0},
\quad 
p^{(0)}= e^{\varphi_{p}^{(0)}-\psi^{(0)}}.
\label{eq:n0-p0}
\ee 
\begin{theorem}
Assuming that $\psi^{(0)}$, $\varphi_\p^{(0)}\in H^1(\Omega) \cap L^\infty(\Omega)$, $\varphi_\n^{(0)}=\varphi_0$, $u_D^{(0)}=0$ is the solution to the zero-order equations \eqref{eq:system-order0}--\eqref{eq:order0-uD} given by Theorem 1, the first order system of equations \eqref{eq:system-order1} admits a unique solution in $H^1(\Omega) \cap L^\infty(\Omega)$ for $\psi^{(1)}$, $\varphi_n^{(1)}$, with $u_D^{(1)}\in\mathbb{R}$. Moreover, this solution is trivial and satisfies the identities
\be
u_D^{(1)}=0, \quad \varphi_n^{(1)}=0, \quad \psi^{(1)}=0.
\label{eq:order1-uD-phin}
\ee

\end{theorem}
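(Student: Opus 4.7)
The plan is to mirror the energy argument used in the proof of Theorem 1, now considerably simpler because the right-hand sides of \eqref{eq:system-order1} are all homogeneous. The decisive observation is that Theorem 1 provides $\varphi_n^{(0)} \equiv \varphi_0$, so $\nabla \varphi_n^{(0)} = 0$ and the coupling condition for $u_D^{(1)}$ collapses to a pure flux integral in $\partial \varphi_n^{(1)}/\partial\boldsymbol{\nu}$ weighted by $\mu_n n^{(0)}$. This is exactly the structure that made the zero-order identity work.

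First I would multiply the equation $-\nabla\cdot(\mu_n n^{(0)} \nabla \varphi_n^{(1)}) = 0$ by $\varphi_n^{(1)}$ and integrate by parts over $\Omega$. The boundary contribution vanishes on $\Gamma_N$ by the Neumann condition and on $\Gamma_{D_1}$ because $\varphi_n^{(1)} = 0$ there. On $\Gamma_{D_2}$ the trace $\varphi_n^{(1)} = u_D^{(1)}$ is a constant that factors out of the remaining surface integral; substituting the coupling condition converts this term into $(u_D^{(1)})^2/\mathcal{R}$. The resulting identity
\[
\int_\Omega \mu_n n^{(0)} |\nabla \varphi_n^{(1)}|^2 \,d\boldsymbol{x} + \frac{(u_D^{(1)})^2}{\mathcal{R}} = 0
\]
consists of two nonnegative contributions, each forced to vanish. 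Together with the homogeneous Dirichlet trace on $\Gamma_{D_1}$, this forces $u_D^{(1)} = 0$ and $\varphi_n^{(1)} \equiv 0$.

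Substituting these identities into the Poisson equation of \eqref{eq:system-order1} produces the homogeneous linear elliptic problem
\[
-\lambda^2 \nabla\cdot(\varepsilon \nabla \psi^{(1)}) + n^{(0)} \psi^{(1)} = 0
\]
with homogeneous Dirichlet data on $\Gamma_{D_1} \cup \Gamma_{D_2}$ (using $u_D^{(1)}=0$) and homogeneous Neumann data on $\Gamma_N$. Testing with $\psi^{(1)}$ and invoking the positivity of $\varepsilon$ and $n^{(0)}$, together with the $L^\infty$ bounds from Theorem 1, gives a coercive quadratic form equal to zero, so $\psi^{(1)} \equiv 0$. Existence in $H^1(\Omega) \cap L^\infty(\Omega)$ then follows trivially from the triviality of the solution, or by a direct Lax--Milgram argument applied to the two linear subsystems.

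I do not expect any serious analytical obstacle: the proof is essentially a linearized replica of the zero-order energy identity with homogeneous data, which automatically forces the solution to be trivial. The only technical subtlety is the coupled existence statement for $(\varphi_n^{(1)}, u_D^{(1)})$, since $u_D^{(1)}$ appears simultaneously as a Dirichlet trace on $\Gamma_{D_2}$ and through the nonlocal circuit integral. This is handled by working on the closed subspace of $H^1(\Omega)$ whose trace on $\Gamma_{D_2}$ is a constant and by incorporating the resistance term into the bilinear form, which remains coercive thanks to the $L^\infty$ lower bound on $n^{(0)}$ inherited from Theorem 1.
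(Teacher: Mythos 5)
Your proposal is correct and follows essentially the same route as the paper: the paper proves $u_D^{(1)}=0$ and $\varphi_n^{(1)}=0$ by invoking "a similar argument to the one used in Theorem 1," which is exactly the energy identity $\int_\Omega \mu_n n^{(0)}|\nabla\varphi_n^{(1)}|^2\,d\boldsymbol{x}+(u_D^{(1)})^2/\mathcal{R}=0$ you write out, and then observes that the remaining problem for $\psi^{(1)}$ is a homogeneous linear elliptic problem with trivial unique solution. Your explicit Lax--Milgram remark on the subspace with constant trace on $\Gamma_{D_2}$ is a useful elaboration but not a departure from the paper's argument.
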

{\bf Proof.}
Using the second and the last equations in \eqref{eq:system-order1}$_2$, together with the boundary conditions, with a similar argument to the one used in Theorem 1, it is possible to show that $u_D^{(1)}=0$ and $\varphi_\n^{(1)}=0$. Then the equation for the potential $\psi^{(1)}$ reduces to
\[\begin{cases}
-\lambda^2 \nabla \cdot (\varepsilon \nabla \psi^{(1)}) =
-n^{(0)}\psi^{(1)},
\\
 \frac{\partial {\psi}^{(1)}}{\partial \boldsymbol{\nu}}=0,  \quad \text{on }\Gamma_N,
\quad 
{\psi}^{(1)} = 0, \quad \text{on }\quad\Gamma_{D_1}\cup \Gamma_{D_2},
\end{cases}
\]
which is a linear elliptic homogeneous problem with unique solution $\psi^{(1)}=0$. Thus, \eqref{eq:order1-uD-phin}
is proven. 
\qed 

\subsection{Existence and uniqueness of solutions for the second order problem} 
Finally, taking into account the results of Theorem 1 and Theorem 2, the equations at order 2 decouple in two systems for $\varphi_\n^{(2)} $ and $\psi^{(2)}$, namely 
\be
\begin{cases}
- \nabla\cdot ( \mu_\n n^{(0)} \nabla \varphi_\n^{(2)}  )= r_0(n^{(0)})(n^{(0)}p^{(0)}-1) -  G(\boldsymbol{x})
\quad \text{ in } \Omega,
\\\frac{
  \partial\varphi_\n^{(2)}}{\partial\boldsymbol{\nu}}=0 \quad \text{ on } \Gamma_{N}, 
\\
\varphi_\n^{(2)} =0 \quad \text{ on } \Gamma_{D,1}, 
\qquad
\varphi_\n^{(2)} =u_D^{(2)} \quad \text{ on } \Gamma_{D,2},
\\
{u}_D^{(2)}   =  -{\mathcal{R}}\displaystyle\displaystyle\int_{{\Gamma}_{D_2}}  \left(\mu_n n^{(0)}\frac{\partial \varphi_n^{(2)}}{\partial \boldsymbol{\nu}} +  \mu_p p^{(0)}\frac{\partial \varphi_p^{(0)}}{\partial \boldsymbol{\nu}} \right) d{\sigma}(\boldsymbol{x}),
\end{cases}
\label{eq:varphin2-uD2}
\ee 
and 
\be
\begin{cases}
-\lambda^2 \nabla \cdot (\varepsilon \nabla \psi^{(2)}) = 
p^{(0)}-n^{(0)}(\psi^{(2)}-\varphi_n^{(2)}),
\\
\frac{\partial {\psi}^{(2)}}{\partial \boldsymbol{\nu}}=
0,  \quad \text{on }\Gamma_N,
\\
{\psi}^{(2)}    = 0,\quad \text{on }\quad\Gamma_{D_1}, \qquad 
{\psi}^{(2)} =  {u}_{D}^{(2)},  \quad \text{on }\quad\Gamma_{D_2},
  \end{cases}
  \label{eq:psi2alone}
\ee
where $n^{(0)}$ and $p^{(0)}$ are given by \eqref{eq:n0-p0}.
Once $\varphi_p^{(0)}$ is known, equation \eqref{eq:varphin2-uD2} is a linear elliptic equation for $\varphi_n^{(2)}$, coupled with the condition \eqref{eq:varphin2-uD2}$_4$ for $u_D^{(2)}$. Then, equation \eqref{eq:psi2alone} becomes a linear equation for $\psi^{(2)}$.

\begin{theorem}
Assuming that $\varphi_\n^{(0)}=\varphi_0 $, $\psi^{(0)}$ and $\varphi_\p^{(0)}\in H^1(\Omega) \cap L^\infty(\Omega)$, $u_D^{(0)}=0$ is the solution to the zero-order equations \eqref{eq:system-order0}--\eqref{eq:order0-uD} given by Theorem 1, the second-order system of equations \eqref{eq:varphin2-uD2} admits a unique solution 
$\varphi_n^{(2)} \in H^1(\Omega) \cap L^\infty(\Omega)$,  $u_D^{(2)} \in \mathbb{R} $, and the boundary value problem \eqref{eq:psi2alone} admits a unique solution ${\psi}^{(2)}  \in H^1(\Omega) \cap L^\infty(\Omega)$. 
Moreover this solution satisfies the estimates
\be
\min\{ 0,  \underline{r} -\overline{G}\} -\overline{{u}}_D 
\le 
\varphi_n^{(2)} \le 
\overline{r} +\overline{{u}}_D,
\label{eq:final-estimate-phin2}
\ee 
\be 
\underline{\psi}^{(2)}  
\le 
{\psi}^{(2)}  
\le  
\overline{\psi}{}^{(2)},  
\label{eq:final-estimate-psi2}
\ee
and 
\be
|{u}_D^{(2)}|\leq  \overline{{u}}_D,
\ee
with 
\be\begin{split}
 \overline{{u}}_D &:= {\mathcal{R}} \left(
\| \mu_n n^{(0)}\|_{L^\infty(\Omega)} 
\| \varphi_n^{*} \|_{H^1(\Omega)}
+\| \mu_p p^{(0)}\|_{L^\infty(\Omega)} 
\| \varphi_p^{(0)} \|_{H^1(\Omega)}
\right) \| w\|_{H^1(\Omega)},
\\
 \underline{\psi}^{(2)}  &:= \min\left\{ -\overline{{u}}_D,\inf\frac{p^{(0)}}{n^{(0)}} \right\} + \min\{0, \underline{r} -\overline{G} \}  -\overline{{u}}_D, 
   \\ 
\overline{\psi}{}^{(2)}  &
:= \max\left\{ \overline{{u}}_D,\sup\frac{p^{(0)}}{n^{(0)}}  \right\}+\overline{r} +\overline{{u}}_D,
\label{eq:estimate-uDbar}
\end{split}
\ee 
where $w$ is the solution of the elliptic boundary value problem:  
\be
\begin{cases}
- \nabla\cdot ( \mu_\n n^{(0)} \nabla w  )=0,
\quad \text{ in } \Omega,
\\\frac{
  \partial w}{\partial\boldsymbol{\nu}}=0 \quad \text{ on } \Gamma_{N}, 
\\
w = w_D = \begin{cases}
    0 \quad \text{ on } \Gamma_{D,1}
    \\
    1 \quad \text{ on } \Gamma_{D,2}
\end{cases}
\end{cases}
\label{eq:aux-w}
\ee 
    In \eqref{eq:final-estimate-phin2} and \eqref{eq:final-estimate-psi2}, $\underbar{r},\overline{r}$ and $\overline{G}$ are defined in \eqref{eq:phip-G-C} and \eqref{eq:r-overbar-underbar}.
\end{theorem}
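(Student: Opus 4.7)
My plan is to attack the coupled problem \eqref{eq:varphin2-uD2} by an affine decomposition $\varphi_n^{(2)} = \varphi_n^{\ast} + u_D^{(2)}\, w$, where $w$ is the auxiliary function of \eqref{eq:aux-w} and $\varphi_n^{\ast}\in H^1(\Omega)$ solves the same equation as $\varphi_n^{(2)}$ but with \emph{homogeneous} Dirichlet data on all of $\Gamma_{D_1}\cup\Gamma_{D_2}$. Both auxiliaries exist uniquely in $H^1(\Omega)\cap L^\infty(\Omega)$: Theorem~1 guarantees that $n^{(0)} = e^{\psi^{(0)}-\varphi_0}$ is bounded above and strictly bounded away from zero, so the bilinear form $a(u,v) := \int_\Omega \mu_n n^{(0)}\nabla u\cdot\nabla v\,dx$ is coercive on $\{v\in H^1(\Omega) : v=0 \text{ on } \Gamma_{D_1}\cup\Gamma_{D_2}\}$ by Poincar\'e--Friedrichs, and Lax--Milgram delivers $H^1$ existence/uniqueness while De~Giorgi--Stampacchia truncation on the bounded source $r_0(n^{(0)})(n^{(0)}p^{(0)}-1)-G$ yields the $L^\infty$ bounds.

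Plugging this decomposition into the coupling condition \eqref{eq:varphin2-uD2}$_4$ reduces the circuit equation to the scalar relation
\[
\Bigl(1 + \mathcal{R}\!\int_{\Gamma_{D_2}}\!\mu_n n^{(0)}\tfrac{\partial w}{\partial\boldsymbol{\nu}}\,d\sigma\Bigr) u_D^{(2)} \;=\; -\mathcal{R}\!\int_{\Gamma_{D_2}}\!\Bigl(\mu_n n^{(0)}\tfrac{\partial \varphi_n^{\ast}}{\partial\boldsymbol{\nu}} + \mu_p p^{(0)}\tfrac{\partial \varphi_p^{(0)}}{\partial\boldsymbol{\nu}}\Bigr)d\sigma.
\]
The crucial identity, obtained by testing \eqref{eq:aux-w} with $w$ itself, is $\int_{\Gamma_{D_2}}\mu_n n^{(0)} \partial_{\boldsymbol{\nu}} w\,d\sigma = \int_\Omega \mu_n n^{(0)}|\nabla w|^2\,dx \ge 0$, which forces the coefficient on the left to be $\ge 1$; uniqueness of $u_D^{(2)}$ follows immediately, hence so does uniqueness of the pair $(\varphi_n^{(2)}, u_D^{(2)})$. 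For the estimate $|u_D^{(2)}|\le\overline{u}_D$, I would convert each boundary integral on the right into a volume integral via Green's identity using $w$ as the test function, and then apply Cauchy--Schwarz; the product structure of $\overline{u}_D$, with $\|w\|_{H^1}$ factored out of the two terms carrying $\varphi_n^{\ast}$ and $\varphi_p^{(0)}$, mirrors exactly this substitution.

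The pointwise bound \eqref{eq:final-estimate-phin2} then follows from the weak maximum principle applied to \eqref{eq:varphin2-uD2}$_1$: with the source trapped between $\underline{r}-\overline{G}$ and $\overline{r}$ by Theorem~1, and Dirichlet data lying in $[-\overline{u}_D,\overline{u}_D]$, the constants $\overline{r}+\overline{u}_D$ and $\min\{0,\underline{r}-\overline{G}\}-\overline{u}_D$ serve as super- and sub-solutions. Finally, with $\varphi_n^{(2)}$ and $u_D^{(2)}$ now in hand, \eqref{eq:psi2alone} is a \emph{linear} elliptic equation with strictly positive zeroth-order coefficient $n^{(0)}/(\lambda^2\varepsilon)$; Lax--Milgram produces a unique $H^1$ solution, and comparison with explicit constant sub/super-solutions -- built so that the right-hand side $p^{(0)}-n^{(0)}(\psi^{(2)}-\varphi_n^{(2)})$ has the correct sign -- yields \eqref{eq:final-estimate-psi2} with $\underline{\psi}^{(2)},\overline{\psi}^{(2)}$ as stated. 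The genuine technical obstacle is step~2: folding the nonlocal algebraic constraint on the Dirichlet trace (the resistor equation) into a coercive elliptic setting. The sign property of the Dirichlet energy of $w$ above is precisely what ensures that the lumped-circuit coupling \emph{enhances} rather than destroys coercivity.
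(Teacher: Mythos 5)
Your proposal follows essentially the same route as the paper: the affine splitting $\varphi_n^{(2)}=\varphi_n^{*}+u_D^{(2)}w$ with the same two auxiliary Dirichlet problems, the reduction of the resistor constraint to a scalar linear equation whose coefficient $1+\mathcal{R}\int_\Omega \mu_n n^{(0)}|\nabla w|^2\,d\boldsymbol{x}\ge 1$ is exactly the positivity the paper exploits, the conversion of the boundary integrals to volume integrals tested against $w$ followed by Cauchy--Schwarz for $|u_D^{(2)}|\le\overline{u}_D$, and maximum-principle comparison for the pointwise bounds on $\varphi_n^{(2)}$ and $\psi^{(2)}$. The only (immaterial) difference is the order of operations --- you substitute the decomposition first and then invoke the energy identity for $w$, while the paper integrates the full coupling condition by parts first, using that the total second-order current is divergence-free --- so this is the paper's argument in substance.
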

\begin{proof}
By standard results, see for example  \cite{protter}, there exists a unique solution $w \in H^1(\Omega) \cap L^\infty(\Omega)$ of problem \eqref{eq:aux-w} such that 
\[
0\le w \le 1.
\]
This allows to rewrite the last equation in \eqref{eq:varphin2-uD2}. Assuming that $w$ satisfies problem \eqref{eq:aux-w} and using integration py parts, we can write
\begin{align*}
    \int_{\Gamma_{D,2}} &\!\!\!\!\! 
    w_D \left(\mu_n n^{(0)}\frac{\partial \varphi_n^{(2)}}{\partial \boldsymbol{\nu}} +  \mu_p p^{(0)}\frac{\partial \varphi_p^{(0)}}{\partial \boldsymbol{\nu}} \right) d{\sigma}(\boldsymbol{x}) \\&= \int_{\partial\Omega} w \left(\mu_n n^{(0)}\frac{\partial \varphi_n^{(2)}}{\partial \boldsymbol{\nu}} +  \mu_p p^{(0)}\frac{\partial \varphi_p^{(0)}}{\partial \boldsymbol{\nu}} \right) d{\sigma}(\boldsymbol{x})
    \\
    &= \int_{\Omega} \nabla \cdot\left(  w (\mu_n n^{(0)}
    \nabla\varphi_n^{(2)}   +  \mu_p p^{(0)}\nabla \varphi_p^{(0)}  \right)  d \boldsymbol{x} 
        \\
    &= \int_{\Omega} \nabla   w  \cdot (\mu_n n^{(0)}
    \nabla\varphi_n^{(2)}   +  \mu_p p^{(0)}\nabla \varphi_p^{(0)}) d \boldsymbol{x}
\end{align*}
in which we have considered the first equation in \eqref{eq:varphin2-uD2} and \eqref{eq:order0-holes-bis}. This implies that 
\be
{u}_D^{(2)}   =  -{\mathcal{R}}\displaystyle\displaystyle\left(\int_{\Omega}   \mu_n n^{(0)}
    \nabla\varphi_n^{(2)} \cdot \nabla   w \, d\boldsymbol{x}  +  \int_{\Omega} \mu_p p^{(0)}\nabla \varphi_p^{(0)} \cdot \nabla   w \,  d \boldsymbol{x}\right).
    \label{eq:uD-on-Omega}
\ee 
We define the function 
\be 
\varphi_n^* = \varphi_n^{(2)}  - u_D^{(2)} w
\label{eq:phi-n*}
\ee
which is the solution of the problem 
\be
\begin{cases}
- \nabla\cdot ( \mu_\n n^{(0)} \nabla \varphi_n^{*}  ) =  
   r_0(n^{(0)})(n^{(0)}p^{(0)}-1) - G(\boldsymbol{x}),
\quad \text{ in } \Omega,
\\[1ex]
\frac{
  \partial\varphi_n^{*}}{\partial\boldsymbol{\nu}}=0 \quad \text{ on } \Gamma_{N}, 
\\[1ex]
\varphi_n^{*} =0 \quad \text{ on } \Gamma_{D},
\end{cases}
\label{eq:electrons-exp-star0}
\ee  
since $u_D  \nabla\cdot ( \mu_\n n^{(0)} \nabla w) =0 $ on the domain $\Omega$,  on $\Gamma_{D,1}$ both $\varphi_n^{(2)}$  and $w_D$ vanish, while  on $\Gamma_{D,2}$ the right hand side vanishes  since $\varphi_n^{(2)} =u_D^{(2)} $  and $w_D = 1$. Again, by standard results it can be proved that there exists a unique solution $\varphi_n^{*} \in H^1(\Omega) \cap L^\infty(\Omega) $ of problem \eqref{eq:electrons-exp-star0} such that 
\be
\min\{ 0,  \underline{r} -\overline{G}\} \le 
\varphi_n^{*} \le  \overline{r},
\ee
where $\underline{r}$ and $\overline{r}$ are defined in \eqref{eq:r-overbar-underbar}.

Using \eqref{eq:uD-on-Omega} and \eqref{eq:phi-n*}, we can estimate the function $u_D^{(2)}$ in the following way
\be
\begin{split}
{u}_D^{(2)}  & =  -{\mathcal{R}}\displaystyle\displaystyle\left(\int_{\Omega}   \mu_n n^{(0)}
    \nabla (\varphi_n^{*}  + u_D^{(2)} w) \cdot \nabla   w \, d\boldsymbol{x}  +  \int_{\Omega} \mu_p p^{(0)}\nabla \varphi_p^{(0)} \cdot \nabla   w \,  d \boldsymbol{x}\right)
    \label{eq:uD-on-Omega-estimate}
    \\
    & =  -\displaystyle\left( {\mathcal{R}} \int_{\Omega}   \mu_n n^{(0)}
    |\nabla w |^2 \, d\boldsymbol{x} \right)  u_D^{(2)}    
     \\
    &\quad -\displaystyle {\mathcal{R}}   \int_{\Omega} ( \mu_n n^{(0)}\nabla \varphi_n^{*} + \mu_p p^{(0)}\nabla \varphi_p^{(0)}) \cdot \nabla   w \,  d \boldsymbol{x} 
    \end{split}
\ee 
which implies 
\[
{u}_D^{(2)} = -  \frac{{\mathcal{R}} \displaystyle   \int_{\Omega} ( \mu_n n^{(0)}\nabla \varphi_n^{*} + \mu_p p^{(0)}\nabla \varphi_p^{(0)}) \cdot \nabla   w \,  d \boldsymbol{x} }{1+{\mathcal{R}}  \displaystyle  \int_{\Omega}   \mu_n n^{(0)}
    |\nabla w |^2 \, d\boldsymbol{x}  }
\] 
thus, estimate \eqref{eq:estimate-uDbar} holds. We can conlude that $\varphi_n^{(2)} =\varphi_n^*+u_D^{(2)} w   \in H^1(\Omega) \cap L^\infty(\Omega) $ and 
\be
\min\{ 0,  \underline{r} -\overline{G} -\overline{{u}}_D \} \le \varphi_n^{(2)} \le 
\overline{r} +\overline{{u}}_D,
\ee
with $\overline{{u}}_D$ defined in \eqref{eq:estimate-uDbar}. 

Finally, we can conclude that the linear problem \eqref{eq:psi2alone} admits a unique solutions $\psi^{(2)}\in H^1(\Omega) \cap L^\infty(\Omega)$, this solution satisfies the estimate in \eqref{eq:final-estimate-psi2}.

\end{proof}

\section{Conclusion}
\label{sec:conclusion}
Summarizing the results in the previous section, namely, equations \eqref{eq:order0-uD-phin} and \eqref{eq:order1-uD-phin}, the expansion in \eqref{exp1} reduces to
\beaa
\psi &=& \psi^{(0)} +\delta^2 \psi^{(2)} + O(\delta^3),  
\\
\varphi_n  & = & \varphi_{0} + \delta^2 \varphi_n^{(2)} + O(\delta^3),
\\
\varphi_p & =&  \varphi_p^{(0)}   + O(\delta),
\\
u_D  & = & \delta^2 u_D^{(2)} + O(\delta^3),
\eeaa
and the system \eqref{eq:vR-scaled-model_fin} is asymptotically equivalent to the following decoupled boundary value problems for the leading order functions $\psi^{(0)}$ and $\varphi_p^{(0)}$:
\bea
&&
\begin{cases}
-\lambda^2 \nabla \cdot \left(\varepsilon \nabla \psi^{(0)}\right) = 
- e^{\psi^{(0)}-\varphi_0}+C(\boldsymbol{x}),
\\[1ex] 
\displaystyle \frac{\partial {\psi}^{(0)}}{\partial \boldsymbol{\nu}}= 0,  \quad \text{on }\Gamma_N,
 \\[2ex]
 {\psi}^{(0)} = {\psi}_0 , \quad \text{on }\quad\Gamma_{D_1}\cup \Gamma_{D_2}, 
\label{eq:system-order0-psi0-concl}
\end{cases}
\\[1ex]
&&
\begin{cases}
- \nabla\cdot \left(\mu_\p    e^{\varphi_p^{(0)}-\psi^{(0)}} \nabla \varphi_{\p}^{(0)}\right) =   G(\boldsymbol{x})-r_{0}\left(  e^{\psi^{(0)}-\varphi_0}\right)\left(e^{\varphi_p^{(0)}-\varphi_0}-1\right), 
\\[1ex] 
  \displaystyle\frac{\partial {\varphi}_\p^{(0)}}{\partial \boldsymbol{\nu}}=0,  \quad \text{on }\Gamma_N,
 \\[2 ex]
  {\varphi}_\p^{(0)} ={\varphi}_0, \quad \text{on }\quad\Gamma_{D_1}\cup \Gamma_{D_2}, 
  \label{eq:system-order0-phip0-concl}
\end{cases}
\eea
which establish the leading order carrier densities $n^{(0)}=\exp (\psi^{(0)}-\varphi_0)$, $p^{(0)}=\exp (\varphi_p^{(0)}-\psi^{(0)})$;
to the following coupled problem for the second order correction of the electron quasi-Fermi potential $\varphi_n^{(2)}$ and the current $u_D^{(2)}$:
\bea
&&
\begin{cases}
- \nabla\cdot ( \mu_\n n^{(0)} \nabla \varphi_\n^{(2)}  )= r_0(n^{(0)})(n^{(0)}p^{(0)}-1) -  G(\boldsymbol{x})
\quad \text{ in } \Omega,
\\[1ex]
\frac{
  \displaystyle\partial\varphi_\n^{(2)}}{\partial\boldsymbol{\nu}}=0 \quad \text{ on } \Gamma_{N}, 
  \\[1ex]
\varphi_\n^{(2)} =0 \quad \text{ on } \Gamma_{D,1}, 
\qquad
\varphi_\n^{(2)} =u_D^{(2)} \quad \text{ on } \Gamma_{D,2},
 \\[1ex]
{u}_D^{(2)}   =  -{\mathcal{R}}\displaystyle\displaystyle\int_{{\Gamma}_{D_2}}  \left(\mu_n n^{(0)}\frac{\partial \varphi_n^{(2)}}{\partial \boldsymbol{\nu}} +  \mu_p p^{(0)}\frac{\partial \varphi_p^{(0)}}{\partial \boldsymbol{\nu}} \right) d{\sigma}(\boldsymbol{x}),
\end{cases}
\label{eq:system-order1-concl}
\eea
and the following boundary-value problem for the second order correction of the potential $\psi^{(2)}$:
\bea
&&\begin{cases}
-\lambda^2 \nabla \cdot (\varepsilon \nabla \psi^{(2)}) = 
p^{(0)}-n^{(0)}(\psi^{(2)}-\varphi_n^{(2)}),
\\[1ex]
\displaystyle\frac{\partial {\psi}^{(2)}}{\partial \boldsymbol{\nu}}=
0,  \quad \text{on }\Gamma_N,
\\[1ex]
{\psi}^{(2)}    = 0,\quad \text{on }\quad\Gamma_{D_1}, \qquad 
{\psi}^{(2)} =  {u}_{D}^{(2)},  \quad \text{on }\quad\Gamma_{D_2},
  \end{cases}
  \label{eq:psi2alone-concl}
\eea
which provides the second order correction to the electron carrier density
$n^{(2)}=n^{(0)}(\psi^{(2)}-\varphi_n^{(2)})$ --- recall that $n=n^{(0)}+\delta^2 n^{(2)}$, $p=\delta^2 p^{(0)}$.


A final comment on the effectiveness of the final reduced model. It comprises two decoupled nonlinear  equations \eqref {eq:system-order0-psi0-concl} and  \eqref{eq:system-order0-phip0-concl} and two linear equations. The coupling to the circuit appears in the linear equation \eqref{eq:system-order1-concl}.
For this model, we were able to show existence and uniqueness without having to rely on a smallness assumption for the laser generation term $G$.
We expect the reduced model to be valuable for a theoretical study of the inverse problem, which is the natural setting for the LPS method.

\section*{Acknowledgments}
This work was supported by the Leibniz competition 2020 (NUMSEMIC, J89/2019).

\appendix
\section{Parameter lists}
\label{sec:parameters}
We briefly list the physical parameters for silicon and gallium arsenide. More details can be found in  \cite{Farrell2021} and \cite{Kayser2021}, respectively. The following optical parameters agree in both cases.

 \begin{center}
 \begin{tabular}{l|c|c|c}
 Physical Quantity & Symbol& Value & Units\\
 \hline
 Reference temperature & $T$ & 300 & \si{K}\\
  Thermal voltage & ${V}_{\text{th}}$ & 300 & \si{K}\\
  Laser power & $P_{\si{L}}$ & {0}--{20} & \si{mW}\\
Laser wave length & $\lambda_{\si{L}}$ & {685} & \si{nm}\\
 Laser penetration depth & $d_{A}$ & {4.8} & \si{\micro\meter}\\
 Laser spot radius & $\sigma_{{L}}$ & $\geq$0.02585 & \si{V}
 \end{tabular}
 \end{center}
\subsection{Parameters for silicon}

 \begin{center}
 \begin{tabular}{l|c|c|c}
 Physical Quantity & Symbol& Value & Units\\
 \hline
 Band gap  & $E_{g}$ & \SI{1.12}{}  & \si{eV}\\
 Density of states in the conduction band & $N_{c}$ & \SI{1.04E19}{} & \si{1/cm^3}\\
 Density of states in the valence band & $N_{v}$ & \SI{2.8E19}{} & \si{1/cm^3}\\
 Relative permittivity& $\varepsilon_{\text{Si}}$ & \SI{11.8}{} & -\\
Reference mobility value  & $\bar{\mu}=\mu_{\si{n,0}}^{\si{ref}}$ & \SI{1323}{} & \si{{cm^2/Vs}}\\
Reference doping profile value  & $\bar{C}$ & \SI{1.2E16}{} & \si{1/cm^3}
 \end{tabular}
 \end{center}

\subsection{Parameters for gallium arsenide}

 \begin{center}
 \begin{tabular}{l|c|c|c}
 Physical Quantity & Symbol& Value & Units\\
 \hline
 Band gap  & $E_{g}$ & \SI{1.424}{}  & \si{eV}\\
 Density of states in the conduction band & $N_{c}$ & \SI{4.7E17}{} & \si{1/cm^3}\\
 Density of states in the valence band & $N_{v}$ & \SI{9E18}{} & \si{1/cm^3}\\
Relative permittivity& $\varepsilon_{\text{SiGa}}$ & \SI{12.9}{} & -\\
 Reference mobility value  & $\bar{\mu}=\mu_{\si{n,0}}^{\si{ref}}$ & \SI{9400}{} & \si{{cm^2/Vs}}\\
Reference doping profile value  & $\bar{C}$ & \SI{1.2E18}{} & \si{1/cm^3}
 \end{tabular}
 \end{center}

\bibliography{references.bib}

\end{document}